\newcommand{\ubar}[1]{\underaccent{\bar}{#1}}
\newtheorem{theorem}{Theorem}
\newtheorem{example}{Example}
\newtheorem{lemma}{Lemma}
\newtheorem{proposition}{Proposition}
\newtheorem*{cor-pur}{Corollary 4}
\newtheorem*{lem-tech}{Lemma 1}
\newtheorem*{lem-tech-DF}{Lemma 7}
\newtheorem*{assumption*}{Assumption}
\theoremstyle{definition}
\newtheorem{definition}{Definition}
\newtheorem*{definition2}{Definition 2'}
\newtheorem{remark}{Remark}
\renewcommand{\blacksquare}{\rule{0.5em}{0.5em}}
\renewenvironment{proof}[1][Proof]{\medskip\noindent\textbf{#1.} }{\ \hfill\blacksquare\\\medskip}
\title{Bayesian Polarization\footnote{I thank Ian Ball, Abhijit Banerjee, Roberto Corrao, Michael Crystal, Bnaya Dreyfuss, Drew Fudenberg, Nima Haghpanah, Stephen Morris, Shakked Noy, Anat Petruschka, Ariel Rubinstein, Ran Spiegler, Alexander Wolitzky, and Zvika Neeman for helpful comments and conversations. Refine.ink was used to check the paper for consistency and clarity.}}
\author{Tuval Danenberg\thanks{Department of Economics, MIT, tuvaldan@mit.edu}}
\date{\today}
\begin{document}

\maketitle
\thispagestyle{empty}
\vspace{-1em}

\begin{abstract}
   Discussions of political disagreement emphasize two patterns: \emph{polarization}, where beliefs diverge toward opposite extremes on each issue dimension; and \emph{issue alignment}, where individuals' views across issues become more internally consistent. We show that both can simultaneously arise under Bayesian learning from public information. We characterize the public signals that can induce persistent polarization on all dimensions and find that evidence of issue alignment can polarize Bayesian agents. However, we show that even stronger notions of polarization---requiring divergence beyond marginal beliefs---are inconsistent with Bayesian rationality. Whether multidimensional belief polarization translates into divergent aggregate positions depends on cross-issue separability. 
\end{abstract}

\newpage

\setcounter{page}{1}

\section{Introduction}

Polarization is widely viewed as a defining feature of contemporary politics. Merriam–Webster named polarization its 2024 Word of the Year, noting that it is ``one idea that both sides of the political spectrum agree on.” Consistent with this sentiment, a \textcite{pew2025} survey reports that 80\% of Americans believe that Republican and Democratic voters disagree not only over policies but also over basic facts.

The term polarization is often used as an umbrella for several related phenomena. A central notion is \textit{divergence}: policy preferences or ideological positions moving farther apart toward opposite extremes. A distinct but related notion is \textit{issue alignment}: individuals’ beliefs becoming more internally consistent, e.g., those who are liberal on one policy dimension are more likely to be liberal on others. Throughout the paper, we reserve the term polarization to refer specifically to divergence.

Focusing on political elites in the United States, \textcite{mccarty2019polarization} documents two stylized facts: elites---politicians at all levels, judges, and media figures---have become increasingly polarized since the 1970s, and this rise in polarization has coincided with increased issue alignment. For example, the correlation between legislators’ positions on economic issues and civil rights in the 1960s was substantially weaker than the correlation between positions on economic policy and affirmative action in the 1990s. Survey evidence points to similar patterns among the public. A \textcite{Gallup2023} study of attitudes on 24 key issues finds that partisan gaps have either persisted or widened across all dimensions, with opinions often moving in opposite directions \parencite[see also][]{Pew2014Polarization,pew2017partisan}. However, while there is broad agreement that strong partisans display polarization and issue alignment patterns similar to those of political elites, scholars debate whether the patterns observed in the general public reflect genuine belief change or only increased alignment between individuals’ views and their partisan identities.\footnote{\textcite{abramowitz2008polarization} argue that mass polarization is substantial, whereas \textcite{fiorina2008political} argue that it is largely a myth. See \textcite{lelkes2016mass} for a review.}

A popular explanation for these trends emphasizes selective exposure to information. Social media platforms are said to facilitate ``echo chambers,'' in which exposure to like-minded others reinforces prior beliefs \parencite{sunstein2018republic, levy2019echo}. But recent studies challenge this account: the extent of selective exposure online is limited \parencite{gentzkow2011ideological}, and some experimental studies find no link between exposure to online echo chambers and polarization \parencite{nyhan2023like}. Consistent with these findings, the \textcite{pew2025} survey reports that 67\% of Americans attribute partisan disagreement over basic facts primarily to differing interpretations of the same information.

Prima facie, this type of polarization appears irrational. Intuition as well as classic theoretical results suggest that rational Bayesian learning from the same information leads to belief convergence, so polarization should not arise. \textcite{baliga2013polarization} (henceforth BHK) present an impossibility result that formalizes this intuition. Considering two Bayesian agents with different priors, they show that observing a common public signal cannot cause beliefs to diverge in the stochastic dominance order.\footnote{Throughout the paper, stochastic dominance refers to first-order stochastic dominance; see formal definition below.}

In BHK, agents’ beliefs are defined over a one-dimensional state space. In practice, political beliefs are multidimensional, spanning foreign policy, the economy, cultural issues, and more. It may seem that if one-dimensional Bayesian polarization is impossible, then so is simultaneous polarization on multiple dimensions; in particular, patterns like those reported in the \textcite{Gallup2023} study are inconsistent with Bayesian updating from public information. Our first contribution is to overturn this intuition: when Bayesian agents' beliefs are defined over a multidimensional state space, a public signal can lead \textit{all} marginal beliefs to simultaneously diverge in the sense of stochastic dominance. Moreover, this polarization can persist even if agents observe an infinite sequence of i.i.d.\@ signals. Thus, the type of polarization that BHK rule out in one dimension after any public signal realization may arise coordinatewise across all dimensions, even in limit beliefs. 

Having established that multidimensional Bayesian polarization is possible, we turn to two natural questions: what type of information generates it, and what are its limits? Our second contribution is a characterization of the public signals that can generate persistent polarization of all marginal beliefs. The result suggests a novel informational mechanism consistent with the stylized facts: evidence of issue alignment can lead to polarization. Our third contribution is to delineate the limits of Bayesian polarization by establishing possibility and impossibility results for stronger notions of belief divergence. 

We establish these results within a simple model. Two Bayesian agents, $L$ and $H$, have prior beliefs over a finite multidimensional state space. There is a meaningful order across all dimensions. For example, the state may represent an optimal policy bundle, with dimensions representing economic, foreign, and climate policy, each ordered from the most liberal policy to the most conservative. Agent $L$ believes that the state is lower than agent $H$ on all dimensions, i.e., $L$ initially tends toward liberal policies. We will say that polarization occurs if there is a public signal that moves $L$'s beliefs further down on all dimensions and $H$'s beliefs further up.  Building on \textcite{dixit2007political}, who study polarization in a one-dimensional state space, we use stochastic dominance to compare marginal beliefs. So our notion of polarization requires that agent $L$'s posterior marginals are stochastically dominated by agent $L$'s prior marginals and vice versa for $H$. 

Theorem \ref{thm:cw possible} shows that this form of polarization is possible. The following example illustrates the result. 

\begin{example}\label{example:coordinatewise}
Consider a $2\times2$ state space $\Theta=\{x_1,x_2\}\times\{y_1,y_2\}$ with $x_1<x_2$ and $y_1<y_2$. For concreteness, interpret the $x$-dimension as economic policy and the $y$-dimension as immigration policy, with lower values corresponding to more liberal positions. Let $P^L$ and $P^H$ denote the prior beliefs of agents $L$ and $H$ over the optimal policy bundle, shown in Figure \ref{fig:example 1 priors}. Agent $L$ assigns more mass to low values on each dimension than agent $H$, so that every marginal of $P^L$ is stochastically dominated by the corresponding marginal of $P^H$.

 \begin{figure}[htbp]
  \centering
  \begin{minipage}[b]{0.48\textwidth}
    \centering
    \begin{tikzpicture}[scale=1.5]
      \draw[->, thick] (-0.3,0) -- (2.2,0) node[right] {$x$};
      \draw[->, thick] (0,-0.3) -- (0,2.2) node[above] {$y$};
      \draw[thick] (0,0) rectangle (2,2);
      \draw[thick] (1,0) -- (1,2);
      \draw[thick] (0,1) -- (2,1);
      \draw (0.5, -0.03) -- (0.5, 0.03) node[below] {$x_1$};
      \draw (1.5, -0.03) -- (1.5, 0.03) node[below] {$x_2$};
      \draw (-0.03, 0.5) -- (0.03, 0.5) node[left] {$y_1$};
      \draw (-0.03, 1.5) -- (0.03, 1.5) node[left] {$y_2$};
      \node at (0.5,0.5)   {\textbf{3/8}};
      \node at (1.5,0.5)   {\textbf{2/8}};
      \node at (0.5,1.5)   {\textbf{2/8}};
      \node at (1.5,1.5)   {\textbf{1/8}};
      \node[anchor=north east, font=\large\bfseries] at (0,0) {\(\mathbf{P^L}\)};
    \end{tikzpicture}
  \end{minipage}%
  \hspace{0.01\textwidth}%
  \begin{minipage}[b]{0.48\textwidth}
    \centering
    \begin{tikzpicture}[scale=1.5]
      \draw[->, thick] (-0.3,0) -- (2.2,0) node[right] {$x$};
      \draw[->, thick] (0,-0.3) -- (0,2.2) node[above] {$y$};
      \draw[thick] (0,0) rectangle (2,2);
      \draw[thick] (1,0) -- (1,2);
      \draw[thick] (0,1) -- (2,1);
      \draw (0.5, -0.03) -- (0.5, 0.03) node[below] {$x_1$};
      \draw (1.5, -0.03) -- (1.5, 0.03) node[below] {$x_2$};
      \draw (-0.03, 0.5) -- (0.03, 0.5) node[left] {$y_1$};
      \draw (-0.03, 1.5) -- (0.03, 1.5) node[left] {$y_2$};
      \node at (0.5,0.5)   {\textbf{1/8}};
      \node at (1.5,0.5)   {\textbf{2/8}};
      \node at (0.5,1.5)   {\textbf{2/8}};
      \node at (1.5,1.5)   {\textbf{3/8}};
      \node[anchor=north east, font=\large\bfseries] at (0,0) {\(\mathbf{P^H}\)};
    \end{tikzpicture}
  \end{minipage}
    \caption{Priors for Example \ref{example:coordinatewise}.}
  \label{fig:example 1 priors}
\end{figure}

Consider a public signal realization that reveals that the state lies on the ``diagonal'': $\{(x_1,y_1),(x_2,y_2)\}$, and is equally likely to be generated in either of these states. That is, the signal implies perfect positive association between the two dimensions while remaining uninformative about the value on each dimension. Let $Q^L$ and $Q^H$ denote the posterior beliefs obtained by Bayesian updating. These are shown in Figure \ref{fig:example 1 posteriors}.

 \begin{figure}[htbp]
  \centering
  \begin{minipage}[b]{0.48\textwidth}
    \centering
    \begin{tikzpicture}[scale=1.5]
      \draw[->, thick] (-0.3,0) -- (2.2,0) node[right] {$x$};
      \draw[->, thick] (0,-0.3) -- (0,2.2) node[above] {$y$};
      \draw[thick] (0,0) rectangle (2,2);
      \draw[thick] (1,0) -- (1,2);
      \draw[thick] (0,1) -- (2,1);
      \draw (0.5, -0.03) -- (0.5, 0.03) node[below] {$x_1$};
      \draw (1.5, -0.03) -- (1.5, 0.03) node[below] {$x_2$};
      \draw (-0.03, 0.5) -- (0.03, 0.5) node[left] {$y_1$};
      \draw (-0.03, 1.5) -- (0.03, 1.5) node[left] {$y_2$};
      \node at (0.5,0.5)   {\textbf{3/4}};
      \node at (1.5,0.5)   {\textbf{0}};
      \node at (0.5,1.5)   {\textbf{0}};
      \node at (1.5,1.5)   {\textbf{1/4}};
      \node[anchor=north east, font=\large\bfseries] at (0,0) {\(\mathbf{Q^L}\)};
    \end{tikzpicture}
  \end{minipage}%
  \hspace{0.01\textwidth}%
  \begin{minipage}[b]{0.48\textwidth}
    \centering
    \begin{tikzpicture}[scale=1.5]
      \draw[->, thick] (-0.3,0) -- (2.2,0) node[right] {$x$};
      \draw[->, thick] (0,-0.3) -- (0,2.2) node[above] {$y$};
      \draw[thick] (0,0) rectangle (2,2);
      \draw[thick] (1,0) -- (1,2);
      \draw[thick] (0,1) -- (2,1);
      \draw (0.5, -0.03) -- (0.5, 0.03) node[below] {$x_1$};
      \draw (1.5, -0.03) -- (1.5, 0.03) node[below] {$x_2$};
      \draw (-0.03, 0.5) -- (0.03, 0.5) node[left] {$y_1$};
      \draw (-0.03, 1.5) -- (0.03, 1.5) node[left] {$y_2$};
      \node at (0.5,0.5)   {\textbf{1/4}};
      \node at (1.5,0.5)   {\textbf{0}};
      \node at (0.5,1.5)   {\textbf{0}};
      \node at (1.5,1.5)   {\textbf{3/4}};
      \node[anchor=north east, font=\large\bfseries] at (0,0) {\(\mathbf{Q^H}\)};
    \end{tikzpicture}
  \end{minipage}
      \caption{Posteriors for Example \ref{example:coordinatewise}.}
  \label{fig:example 1 posteriors}
\end{figure}

Despite observing the same signal, the agents’ marginal beliefs diverge on each dimension. Writing $P_i^L,P_i^H$ and $Q_i^L,Q_i^H$ for the marginal distributions on dimension $i=1,2$, we have (here $\succeq_{st}$ is the stochastic dominance order)
\[
Q^L_i=(3/4,1/4)\prec_{st} P^L_i=(5/8,3/8)\prec_{st} P^H_i=(3/8,5/8)\prec_{st} Q^H_i=(1/4,3/4).
\]
Thus, agent $L$ becomes more confident that the state is low on each dimension, while agent $H$ becomes more confident that it is high, even though both update rationally from the same information.
\end{example}

Example \ref{example:coordinatewise}---which extends to any finite number of dimensions---shows that coordinatewise polarization is possible after a single public signal. As mentioned above, we also establish the possibility of \textit{limit} coordinatewise polarization---polarization in limit beliefs for an infinite sequence of i.i.d.\@ signals. 

Turning to the information that generates polarization, Theorem \ref{thm:cw conditions} presents a complete characterization of the signals that can lead to limit coordinatewise polarization in two-dimensional state spaces. An immediate necessary condition is \textit{partial identification}: there must be at least one additional state that is observationally equivalent to the true state of the world. Otherwise, beliefs converge to the truth, precluding polarization.\footnote{Formally, if the signal is $X$, the state space is $\Theta$, and the true state is $\hat\theta$, then there must be at least one state $\theta\in\Theta\setminus\{\hat\theta\}$ such that for every realization $x$: $\Pr[X=x|\theta]=\Pr[X=x|\hat\theta]$. Otherwise, Doob’s Consistency Theorem \parencite{doob1949application} implies that, from any full-support prior, beliefs converge to a point mass on $\hat\theta$.} Therefore, any polarizing signal must induce a set of states that are observationally equivalent to the true state. We call this the \emph{identified set}. The problem of characterizing limit polarizing signals reduces to characterizing the identified sets that permit polarization.

Under a mild technical condition, Theorem \ref{thm:cw conditions} establishes that an identified set can lead to coordinatewise polarization if and only if it satisfies three geometric conditions. Two of these—\textit{spanning} and \textit{balance}—govern how much the signal resolves disagreement about the value of the state. Spanning requires that both the identified set and its complement include states attaining the minimal and maximal values on each dimension, so that extreme outcomes remain possible in every direction. Balance rules out signals that decisively point toward uniformly high or uniformly low states. Taken together, these conditions ensure that the signal preserves uncertainty about the magnitude of the state, both within and across dimensions.

The third condition, \textit{non-compensation}, governs how the signal links the dimensions to one another. It rules out signals that systematically trade off one issue against another by pairing high values on one dimension with low values on the other. This condition is closely related to issue alignment: signals that imply perfect positive association between dimensions satisfy it, whereas signals implying perfect negative association violate it. Intuitively, polarization requires information that links the different issues without resolving disagreement about their values.

In the $2\times2$ state space, the theorem delivers a sharp implication. The diagonal \linebreak $\{(x_1,y_1),(x_2,y_2)\}$ is the \emph{unique} identified set that satisfies all three conditions. Consequently, the signal in Example \ref{example:coordinatewise} is essentially the only public signal that can generate persistent polarization on both dimensions. In this case, polarization necessarily coincides with increased issue alignment. In larger state spaces, increased alignment is no longer necessary for polarization; yet there always exist signals that generate both polarization and perfect alignment.

The result points to a mechanism through which public information about issue alignment can generate polarization. Information that reveals that dimensions are aligned---without revealing whether the state is overall high or low---can push beliefs toward opposite extremes. This information could be interpreted as a growing public understanding of an ideological structure that deems mixed positions---for example, liberal on economics and conservative on immigration---inconsistent \parencite{converse1964nature}. In line with the idea that agents can learn about alignment, \textcite{lupton2015political} show that such ideologically structured positions are positively associated with political sophistication. Regardless of the source of issue alignment, our contribution is to show that once it emerges, it can generate polarization through simple Bayesian updating.

Up to this point, we have considered polarization of all marginal beliefs, which we refer to as polarization in the \textit{coordinatewise order}. To clarify the constraints that Bayesian updating imposes on polarization, we next consider polarization in two stronger stochastic orders commonly used to compare multivariate distributions. Theorem \ref{thm:uo polarization} shows that polarization in the \textit{upper orthant order} is possible in the short-run but not in the long-run. Theorem \ref{thm:st impossible} shows that polarization in the \textit{stochastic order} is impossible following any single signal realization. Table \ref{tab:possible} summarizes these results. Here, we distinguish one-shot polarization, which occurs after a single signal realization, from limit polarization, which arises following an infinite sequence of i.i.d.\@ signals.

Taken together, the three orders define a feasibility frontier for Bayesian polarization. One-shot polarization is possible for any order weaker than the upper orthant order, but impossible for any order stronger than stochastic dominance. Limit polarization is possible for orders weaker than the coordinatewise order, but impossible for orders stronger than the upper orthant order.  

Mathematically, our result for the stochastic dominance order shows that the BHK impossibility result extends to higher dimensions. Substantively, however, it is the coordinatewise order---not stochastic dominance---that best reflects how polarization is measured in practice. Empirical work typically tracks divergence along individual issue dimensions such as climate policy, trade, or cultural issues. The coordinatewise possibility result shows that this form of polarization is fully consistent with Bayesian updating, even in the long-run limit. By contrast, the stronger orders capture divergence on \emph{combinations} of issues, e.g., whether the state is simultaneously high on several dimensions. The impossibility results imply that agreement may persist on such joint events, even when marginal beliefs diverge. Evidence of polarization on such issue combinations would be harder to reconcile with rational updating.

We conclude by studying when multidimensional belief polarization is reflected in one-dimensional aggregate positions. We show that the scope for a robust notion of polarization in positions depends on the separability of the aggregation rule across dimensions: additively separable aggregators allow persistent polarization, multiplicatively separable aggregators allow it only in the short run, and sufficiently strong cross-dimensional complementarities preclude it.

\begin{table}[t]
  \centering
  \caption{Possibility of Polarization}
  \label{tab:possible}
  \begin{tabular}{@{} l c c @{}}
    \toprule
    \multicolumn{1}{l}{\textbf{Stochastic Order}} 
      & \textbf{One‐shot Polarization} 
      & \textbf{Limit Polarization} \\
    \midrule
    Coordinatewise
      & $\checkmark$ 
      & $\checkmark$ \\
    \addlinespace
    Upper Orthant
      & $\checkmark$ 
      & X \\
    \addlinespace
    Stochastic Dominance    
      & X 
      & X \\
    \bottomrule
  \end{tabular}
\end{table}

\section{Related Literature}\label{sec:lit}

Our main notion of polarization builds on that of \textcite{dixit2007political}, who study Bayesian agents with beliefs over a one-dimensional state space and define polarization as divergence in the stochastic dominance order following a single public signal realization. They show that such polarization is impossible when signals satisfy the monotone likelihood ratio property (MLRP). BHK adopt this definition and prove that polarization is impossible for \emph{any} signal realization when agents are Bayesian and the state space is finite and one-dimensional.\footnote{\textcite{dixit2007political} present examples that violate MLRP, which they argue generate polarization. As noted by BHK, however, these examples exhibit divergence in expectations rather than divergence in stochastic dominance. \textcite{dixit2007political} also sketch examples of polarization in continuum state spaces; since BHK’s impossibility result applies only to finite state spaces, the possibility of polarization in the continuum remains open.
}

BHK's impossibility result reinforced the view that polarization is inconsistent with Bayesian learning. This aligns with classic results by \textcite{doob1949application} and \textcite{blackwell1962merging}, which establish distinct forms of belief convergence among Bayesians observing the same information. To reconcile the perceived theoretical impossibility with the apparent prevalence of polarization, a growing literature has introduced learning biases or other departures from standard Bayesian updating.\footnote{This literature is often motivated by same-evidence polarization experiments, starting with \textcite{lord1979biased}; see \textcite{rabin1999first} and \textcite{benoit2019apparent} for a review and discussion.} These include ambiguity aversion (BHK), confirmation bias \parencite{rabin1999first,fryer2019updating}, misperceptions about selective sharing on networks \parencite{bowen2023learning}, short-term memory \parencite{levy2025political}, and correlation neglect \parencite{levy2015does, loh2019dimensionality, ortoleva2015overconfidence}.

Other work has challenged this view by showing that some forms of polarization are consistent with Bayesian updating. In particular, several papers demonstrate that polarization along a single dimension may arise when agents hold beliefs over a multidimensional state space. \textcite{andreoni2012diverging} provide a simple theoretical example and experimental evidence, while \textcite{jern2014belief} present additional examples using Bayesian networks. A more general analysis is provided by \textcite{benoit2019apparent}, who characterize the public signals that can generate such polarization. Whereas they focus on beliefs after a single realization, \textcite{acemoglu2016fragility} consider agents observing an infinite sequence of i.i.d.\@ signals and establish that limit beliefs can diverge when agents are uncertain about signal likelihoods. In our terms, they demonstrate limit polarization on one dimension in a two-dimensional state space. However, their focus is on the \textit{magnitude} of disagreement rather than its \textit{direction}, showing that even vanishing uncertainty about likelihood functions can lead to substantial asymptotic disagreement. Taken together, these papers show that Bayesian learning does not preclude polarization, but suggest that for agents to polarize they need to be uncertain about some ancillary dimension, distinct from the dimension of disagreement.

We show that this restriction is unnecessary: Bayesian agents may simultaneously polarize on all dimensions of the state space. Moreover, on each dimension polarization occurs in the strong sense of stochastic dominance, and can persist in limit beliefs.\footnote{Most previous papers demonstrated polarization over dimensions with only two elements, where the choice of stochastic order is irrelevant.} In the short run, we find that an even stronger form of polarization---in the upper orthant order---is achievable.

Several papers have argued for the rationality of polarization outside the standard Bayesian paradigm or using different notions of polarization. \textcite{seidenfeld1993dilation} characterize belief dilation: the divergence of probability bounds for a family of beliefs following any signal realization. \textcite{nimark2019inattention} obtain persistent polarization due to rational inattention. \textcite{nielsen2021persistent} consider polarization of Bayesian beliefs on individual events and in the total variation distance. \textcite{dorst2023rational} argues that it can be rational for beliefs to diverge in ex-ante predictable directions.

Overall, existing work has reached seemingly conflicting conclusions on the compatibility of polarization with Bayesian learning, largely due to differing definitions. We contribute by clarifying \textit{which types} of polarization are feasible for Bayesian agents with beliefs over a finite multidimensional product space. Furthermore, while most previous papers focus either on short-run or limit polarization, we distinguish between the two and show that the distinction has bite: under the upper orthant order, polarization is possible only in the short run.

Finally, we analyze which public information can generate limit coordinatewise polarization, and show how evidence of issue alignment can itself be polarizing. \textcite{benoit2019apparent} also characterize polarizing signals, but consider a more specialized setting and a weaker notion of polarization.\footnote{In our terms, \textcite{benoit2019apparent} characterize the signals that lead to one-shot polarization on one dimension in a $2\times 2$ space, whereas we characterize the signals that lead to limit coordinatewise polarization in arbitrary finite two-dimensional spaces.} Their characterization serves as a stepping stone toward an analysis of \textit{population polarization}, which we do not consider. \textcite{spector2000rational} and \textcite{vaeth2025rational} present different models in which a form of rational learning generates both issue alignment and polarization: the former models agents learning about a common state from each other's cheap-talk messages, while the latter considers rationally inattentive agents learning about their idiosyncratic ideological positions. In contrast to these papers, we do not model why issue alignment arises. Instead, we observe that when it does, it can generate polarization through a Bayesian mechanism.

\section{Preliminaries}\label{sec:prelim}

\subsection{One-shot and Limit Polarization}
Two agents, $L$ and $H$, are learning about the value of a state $\hat\theta\in\Theta\subset \mathbb{R}^d$, where $d\ge2$. Each agent $i\in \{L,H\}$ has a prior $P^i$ with full support on $\Theta$. This state space is a finite product space, $\Theta=\Theta_1\times\cdots\times\Theta_d$, where $1<|\Theta_i|<\infty$ for all $i\in[d]$.\footnote{Throughout, for any integer $n\in\mathbb{N}$ we denote $[n]=\{1,2,\ldots,n\}$.}

Agents observe a realization $x\in \mathcal{X}$ of a public signal $X$, where $\mathcal{X}$ is finite. The likelihood function for this realization, known to both agents, is $\ell(\theta):=\Pr[X=x|\theta]$ for every state $\theta\in\Theta$.  Upon observing the signal, the agents follow Bayes rule to update their beliefs to the respective posteriors $Q^L$ and $Q^H$. We assume that $\ell(\theta)>0$ for at least one state, so that Bayesian updating is well defined.

\begin{definition}[One-shot polarization]\label{def:one shot polarization}
 Let $\succeq$ be a partial order over $\Delta(\Theta)$. Say that \emph{one-shot $\succeq$-polarization} occurs upon observing $x$ if
      \[
      Q^L\prec P^L\prec P^H\prec Q^H
      \]
  \end{definition}

We will also consider the case where the agents observe an infinite sequence of i.i.d.\@ public signals $X_1,X_2,\ldots$, all distributed as $X$. Here, we let $Q^L(t),Q^H(t)$ denote the posteriors after observing $X_1,\ldots,X_t$, and let $Q^L(\infty),Q^H(\infty)$ denote the limit posteriors. That is, $Q^L(\infty):=\lim_{t\rightarrow\infty}Q^L(t)$, and $Q^H(\infty):=\lim_{t\rightarrow\infty}Q^H(t)$ (we show below that there are unique limit distributions to which these sequences converge almost surely).  

\begin{definition}[Limit polarization]\label{def:limit polarization}
 Let $\succeq$ be a partial order over $\Delta(\Theta)$. Say that \emph{limit $\succeq$-polarization} occurs if the limit posteriors satisfy 
      \[
      Q^L(\infty)\prec P^L\prec P^H\prec Q^H(\infty)
      \]
  \end{definition}   

\subsection{Limit Beliefs and Partitional Signals}\label{subsec:limit}

This subsection presents an equivalent definition of limit polarization that simplifies the analysis. We begin with some general observations about limit beliefs. 

Consider an agent with prior $P\in\Delta(\Theta)$ observing an infinite sequence $X_1,X_2,\ldots $ of i.i.d.\@ signals supported on $\mathcal{X}$ (recall that $\Theta$ and $\mathcal{X}$ are finite). Let $\ell$ be the likelihood function of each of these signals, with $\ell(x;\theta)=\Pr[X=x|\theta]$.\footnote{Outside this section we fix the signal realization $x$ and simply write $\ell(\theta)$.} Let $\hat\theta$ be the true state of the world. The \emph{identified set} is
\[\hat\Gamma:=\{\,\theta\in\Theta \mid \ell(x;\theta)=\ell(x;\hat\theta)\,\,\forall x\in\mathcal{X}\,\}\]
Let $Q(t)$ be the posterior upon observing all signals up to and including $X_t$ and define $Q^*\in\Delta(\Theta)$ as:
\[Q^*(\theta) = \begin{cases}
 \frac{P(\theta)}{P(\hat\Gamma)}, & \theta\in\hat\Gamma \\
  0, & \text{else}
\end{cases}
\]
The following result is a simple corollary of Doob's Theorem on Bayesian consistency \parencite{doob1949application}.\footnote{A version of this corollary appears in Theorem 6.1 of \textcite{ramamoorthi2025doob}, for a more general setting. For completeness we provide a self-contained proof.}
\begin{lemma} \label{lemma:limit posterior}
   For every $\theta\in\Theta$, $Q(t)(\theta)\rightarrow Q^*(\theta)$ almost surely.
\end{lemma}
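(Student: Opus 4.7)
My plan is to apply Bayes' rule explicitly, normalize the likelihoods by those of the true state $\hat\theta$, and then invoke the strong law of large numbers (SLLN). After $t$ signals Bayes' rule gives
\[
Q(t)(\theta) \;=\; \frac{P(\theta)\prod_{s=1}^t \ell(X_s;\theta)}{\sum_{\theta'\in\Theta} P(\theta')\prod_{s=1}^t \ell(X_s;\theta')}.
\]
Since the $X_s$ are i.i.d.\ with distribution $\ell(\cdot;\hat\theta)$, $\ell(X_s;\hat\theta)>0$ almost surely, so I can divide numerator and denominator by $\prod_{s=1}^t \ell(X_s;\hat\theta)$ and work with the likelihood ratios $R_t(\theta):=\prod_{s=1}^t \ell(X_s;\theta)/\ell(X_s;\hat\theta)$. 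The lemma then reduces to showing, almost surely, that $R_t(\theta)\to 1$ for every $\theta\in\hat\Gamma$ and $R_t(\theta)\to 0$ for every $\theta\notin\hat\Gamma$.

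The first case is immediate: by definition of $\hat\Gamma$, each factor in $R_t(\theta)$ equals $1$. For the second case, I would split on whether $\ell(\cdot;\theta)$ is absolutely continuous with respect to $\ell(\cdot;\hat\theta)$. If some $x^*$ has $\ell(x^*;\hat\theta)>0$ but $\ell(x^*;\theta)=0$, Borel--Cantelli implies $X_s=x^*$ almost surely for some $s$, at which point $R_t(\theta)=0$ thereafter. Otherwise $\log R_t(\theta)$ is a well-defined sum of i.i.d.\ random variables and the SLLN yields
\[
\tfrac{1}{t}\log R_t(\theta) \;\longrightarrow\; \mathbb{E}_{\hat\theta}\!\left[\log \tfrac{\ell(X;\theta)}{\ell(X;\hat\theta)}\right] \;=\; -D_{\mathrm{KL}}\bigl(\ell(\cdot;\hat\theta)\,\|\,\ell(\cdot;\theta)\bigr)
\]
almost surely. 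Because $\theta\notin\hat\Gamma$, the two conditional distributions on $\mathcal{X}$ differ, so Gibbs' inequality gives $D_{\mathrm{KL}}>0$, and therefore $R_t(\theta)\to 0$.

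Combining the two cases, the denominator $\sum_{\theta'}P(\theta')R_t(\theta')$ converges almost surely to $\sum_{\theta'\in\hat\Gamma}P(\theta')=P(\hat\Gamma)>0$ (positivity uses the full-support assumption on $P$). For $\theta\in\hat\Gamma$ the numerator is constantly $P(\theta)$, so $Q(t)(\theta)\to P(\theta)/P(\hat\Gamma)=Q^*(\theta)$; for $\theta\notin\hat\Gamma$ the numerator vanishes and $Q(t)(\theta)\to 0=Q^*(\theta)$. Finiteness of $\Theta$ lets me intersect finitely many full-measure events to obtain simultaneous convergence on a single full-measure set.

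The only real obstacle is bookkeeping around the possibility that $\ell(x;\theta)=0$ for some realization, which prevents a single clean application of the SLLN to $\log R_t(\theta)$. Splitting out the zero-likelihood subcase, as above, handles this without any further measure-theoretic subtlety; everything else is essentially Doob's consistency theorem specialized to a finite state space.
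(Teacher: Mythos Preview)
Your argument is correct and complete. The paper takes a slightly different organizational route: it passes to the quotient state space obtained by identifying states with identical likelihood functions, notes that on this quotient the parameter is identifiable, and invokes Doob's posterior-consistency theorem as a black box to obtain $Q(t)(\hat\Gamma)\to 1$ a.s.; the individual-state limits then follow from the elementary identity $Q(t)(\theta)/Q(t)(\hat\Gamma)=P(\theta)/P(\hat\Gamma)$ for $\theta\in\hat\Gamma$. You instead unpack the consistency step directly via the SLLN and Gibbs' inequality on log-likelihood ratios, which is more self-contained and avoids citing Doob. The paper's route has the expository payoff of making the ``quotient by observational equivalence'' structure explicit---this is exactly what motivates the subsequent reduction of limit polarization to one-shot polarization for partitional signals---while yours is cleaner as a standalone proof. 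One minor terminological slip: the case you describe (``some $x^*$ has $\ell(x^*;\hat\theta)>0$ but $\ell(x^*;\theta)=0$'') is the failure of absolute continuity of $\ell(\cdot;\hat\theta)$ with respect to $\ell(\cdot;\theta)$, not the direction you stated; the mathematics is unaffected.
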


Lemma \ref{lemma:limit posterior} implies that the limit belief following an infinite sequence of signals is almost surely equal to the posterior belief that would arise upon observing a single signal that revealed that the state is in the identified set and nothing else. Formally, say that $X$ is a \emph{partitional signal} if there exists a partition $\mathcal{G}$ of $\Theta$ and a one-to-one mapping $x:\mathcal{G}\rightarrow \mathcal{X}$ such that for every $\theta\in \Gamma\in \mathcal{G}$: 
\[\ell(x(\Gamma);\theta) =1\text{ and } \ell(x';\theta) =0\text{ for all } x'\neq x(\Gamma). \]
The signal $X$ reveals what partition cell the state lies in, and two states in the same partition cell are observationally equivalent. 

Henceforth, we identify partitional signal realizations with their associated partition cells, i.e., we set $\mathcal{X}=\mathcal{G}$ and $x(\Gamma)=\Gamma$ for all $\Gamma\in\mathcal{G}$. Lemma \ref{lemma:limit posterior} implies that limit polarization is equivalent to one-shot polarization upon observing a partitional signal. That is, it shows that the following definition is equivalent to Definition \ref{def:limit polarization}.
\begin{definition2}
    Let $\succeq$ be a partial order over $\Delta(\Theta)$ and let $X$ be a partitional signal with partition $\mathcal{G}$. Let $P^L,P^H\in\Delta(\Theta)$ be the agents' priors and let $Q^L,Q^H$ be their respective posteriors upon observing a realization $\Gamma$ of $X$. Say that \emph{limit $\succeq$-polarization} occurs if 
     \[
      Q^L\prec P^L\prec P^H\prec Q^H
      \]
\end{definition2}
This formulation will be our working definition. One advantage is that it clarifies that limit polarization is a special case of one-shot polarization. An immediate implication is that if limit polarization is possible then so is one-shot polarization. Conversely, if one-shot polarization is impossible then limit polarization is impossible as well. 

\section{Coordinatewise Polarization}

\subsection{Definition}

Our definition of polarization requires a stochastic order to compare beliefs. Our primary notion of polarization will be polarization in the coordinatewise order, which we now define. Given two probability measures $P^L,P^H\in\Delta(\Theta)$, let $P^L_i,P^H_i$ be their respective marginals on the $i$ coordinate and $F^L_i, F^H_i$ be the cdfs of those marginals. We use $\succeq_{st}$ to denote the stochastic dominance order for univariate distributions. That is, we say $P^L_i$ is stochastically dominated by $P^H_i$, denoted $P^L_i\preceq_{st} P^H_i$, if $F^L_i(x)\ge F^H_i(x)$ for every $x\in\mathbb{R}$.
\begin{definition}[Coordinatewise dominance]
    Given two probability measures  $P^L,P^H\in\Delta(\Theta)$ say that $P^L$ is \emph{coordinatewise dominated} by $P^H$, denoted $P^L\preceq_{cw} P^H$, if $P^L_i\preceq_{st} P^H_i$ for all $i\in[d]$.
\end{definition}

Combining this definition of coordinatewise dominance with Definition \ref{def:one shot polarization}, we get that $\succeq_{cw}$-polarization (which we call coordinatewise polarization) occurs whenever $L$'s prior is stochastically dominated by $H$'s on every marginal, and a signal realization pushes all of $L$'s marginal beliefs further down and all of $H$'s marginal beliefs further up in the stochastic dominance order. If this holds after one signal realization then one-shot coordinatewise polarization occurs; if it holds for limit beliefs following an infinite sequence of signal realizations, then limit coordinatewise polarization occurs. Equivalently, one-dimensional polarization as defined by \textcite{dixit2007political} occurs for each marginal. 

As explained in the introduction, we take coordinatewise polarization as our main definition of polarization because it seems closest to the way multidimensional polarization is discussed and measured in practice. This crucially relies on the existence of a meaningful order---such as low-high, left-right or liberal-conservative---that applies to each dimension. When such an order exists, as it often does in practice, coordinatewise polarization can be interpreted as a case where the agents' beliefs diverge in this order following public information, e.g., agent $L$ becomes more liberal while agent $H$ becomes more conservative.

Our first result is that limit coordinatewise polarization is possible, which immediately implies that one-shot coordinatewise polarization is possible as well. Our second result characterizes the signals that lead to limit coordinatewise polarization. Before presenting these results, we address two possible critiques of coordinatewise polarization.

One could argue that the coordinatewise order is too restrictive for a definition of polarization. It requires divergence on \textit{all} dimensions and measures divergence on each dimension using stochastic dominance, itself a demanding criterion. Indeed, one might say that agents polarize if they diverge on most dimensions or if their marginal beliefs diverge in expectations. However, this critique only strengthens our possibility result: by establishing that polarization is possible in this strong sense, we immediately imply that polarization is possible according to any weaker order, such as divergence in means or on a subset of dimensions. The concern is nevertheless relevant for our characterization result---different characterizations would apply to weaker notions of divergence.

Conversely, one might argue that this order is too weak. Because it focuses on marginals, it allows for the possibility that beliefs converge on events that combine multiple dimensions. Moreover, standard multivariate stochastic orders used to compare distributions are strictly stronger than the coordinatewise order. Since discussions of multidimensional polarization typically consider beliefs on each separate issue (e.g., surveys ask respondents  about their economic positions and positions on national security separately) we view coordinatewise polarization as the most practically relevant measure. Nevertheless, it is theoretically important to understand whether stronger notions of polarization are consistent with Bayesian updating, a question we turn to in the next section. 

\subsection{Possibility of Coordinatewise Polarization}

\begin{theorem}\label{thm:cw possible}
  Limit $\succeq_{cw}$-polarization is possible. 
\end{theorem}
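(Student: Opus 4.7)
My plan is to prove the theorem by an explicit construction, using Definition 2' to replace the infinite-signal setup with a single partitional signal whose identified set $\Gamma$ may be chosen at will. Since the statement is an existence claim, a single concrete example suffices.

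The simplest case to work in is $d=2$ with $|\Theta_1|=|\Theta_2|=2$, as illustrated in the bottom-right panel of Figure \ref{fig:cw conditions 2*2}. Guided by the introductory discussion, I would take $\Gamma=\{\ubar\theta,\bar\theta\}$ to be the diagonal, so that conditioning on $\Gamma$ forces the two coordinates to be perfectly correlated in each posterior. For priors, I would use a symmetric construction: letting $\sigma$ be the coordinate-reversing involution on $\Theta$, I take $P^L$ to place more mass on $\ubar\theta$ than on $\bar\theta$, and define $P^H(\theta)=P^L(\sigma(\theta))$. Since $\Gamma$ is $\sigma$-invariant, this makes the posterior inequality $P^H\prec_{cw} Q^H$ follow from $Q^L\prec_{cw} P^L$ by symmetry, and renders the middle inequality $P^L\prec_{cw} P^H$ immediate from the fact that $P^L$ concentrates mass on the lower states.

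Concretely, I would try $P^L(\ubar\theta)=0.4$, $P^L(\bar\theta)=0.1$, and $P^L=0.25$ on each off-diagonal state. The marginals give $P^L(\theta_i=\bar\theta_i)=0.35$ on each coordinate $i$, whereas the posterior $Q^L$ is supported on $\Gamma$ with $Q^L(\bar\theta)=0.1/0.5=0.2$, so $Q^L(\theta_i=\bar\theta_i)=0.2<0.35$. This establishes $Q^L\prec_{cw} P^L$ marginal by marginal via the equivalence in Definition \ref{def:stochastic orders}(3); the reflected inequality $P^H\prec_{cw} Q^H$ then follows by $\sigma$-symmetry; and $P^L\prec_{cw} P^H$ follows from the direct comparison $0.35<0.65$. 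Chaining these three completes the proof.

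I do not anticipate a real obstacle: the only delicate point is choosing the off-diagonal prior mass large enough that $P^L$'s marginals lie strictly between $Q^L$'s and $P^H$'s marginals, which is why $P^L(\bar\theta)=0.1$ is small relative to the $2\times 0.25$ off-diagonal mass. The construction adapts to any larger finite product space satisfying the model assumptions by concentrating prior mass on the two extreme corners $\ubar\theta,\bar\theta$ of $\Theta$ as above, with a small $\sigma$-symmetric full-support perturbation on the remaining states; since the strict marginal inequalities involve only finitely many quantities, any sufficiently small perturbation preserves them.
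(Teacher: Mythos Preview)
Your $2\times 2$ construction is correct and is essentially the paper's Example~\ref{example:coordinatewise} with different numbers; the approach---taking $\Gamma=\{\ubar\theta,\bar\theta\}$ with mirror-image priors and checking the single nontrivial marginal inequality on each coordinate---is exactly what the paper does. If the theorem is read as a bare existence claim, this already suffices.

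Your extension to arbitrary $\Theta$, however, does not work as stated. If you concentrate prior mass on $\{\ubar\theta,\bar\theta\}$ and put only a small full-support perturbation elsewhere, then conditioning on $\Gamma=\{\ubar\theta,\bar\theta\}$ barely moves the prior: $Q^L=P^L|_\Gamma$ is close to $P^L$, so you cannot obtain the strict marginal separation required for $Q^L\prec_{cw}P^L$. Your appeal to ``finitely many strict inequalities preserved under small perturbation'' presupposes that the limiting (unperturbed) case already exhibits strict polarization, but in that limit $Q^L=P^L$ exactly. The paper's proof for general $\Theta$ runs in the opposite direction: it takes the prior nearly \emph{uniform}, with only an $\varepsilon/|\Theta|$ tilt toward the corners, so that each prior marginal on $\Theta_i$ is close to uniform while the posterior marginal places mass $(1\pm\varepsilon)/2$ on the two extreme coordinate values. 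Polarization on a coordinate with $n=|\Theta_i|$ values then reduces to $\varepsilon\ge |\Theta|(n-2)/\bigl(n(|\Theta|-2)\bigr)$, which is always strictly less than $1$ and hence achievable. So the off-$\Gamma$ mass must be \emph{large} (of the same order as the on-$\Gamma$ mass), not a small perturbation.
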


Example \ref{example:coordinatewise} in the introduction proves the theorem for the special case of a $2\times 2$ space. The general proof uses a similar construction: the identified set contains two points, $\ubar\theta$, the pointwise minimal state in $\Theta$, and $\bar\theta$, the pointwise maximal state. Agent $L$'s prior places more mass on $\ubar\theta$ than on $\bar\theta$ with the reverse holding for agent $H$, and both priors are uniform over all other states. Thus, we have $P^L\prec_{cw} P^H$. Furthermore, for sufficiently skewed priors, we find that revealing the state to be either $\ubar\theta$ or $\bar\theta$ pushes agent $L$'s beliefs down in the coordinatewise order and agent $H$'s beliefs up, i.e., coordinatewise polarization occurs. Since coordinatewise polarization occurs after a single realization of a  partitional signal, the discussion following Lemma \ref{lemma:limit posterior} implies that it can also arise after an infinite sequence of i.i.d.\@ signals so that limit  coordinatewise polarization is possible.\footnote{Formally, we have only defined one signal realization. But since this realization reveals that the state is in some set and nothing else, it is easy to define a partitional signal that has this realization in its support.} 

Theorem \ref{thm:cw possible} stands in stark contrast to the findings of BHK. While they find that for one-dimensional beliefs, polarization in stochastic dominance is impossible following any single signal realization, we find that when beliefs are multidimensional, polarization in stochastic dominance can arise simultaneously on all marginals. Thus, when beliefs are multidimensional, polarization on all individual issues is consistent with Bayesian updating. 

While the theorem shows that coordinatewise polarization is only possible, Appendix \ref{appendix:probability} demonstrates that it can also be \textit{probable}. We provide a simple example in which coordinatewise polarization occurs with high probability under both agents' priors. However, there is a tradeoff between the probability and magnitude of polarization.

\subsection{Which Signals Lead to Limit Coordinatewise Polarization}
 
The signal realization that polarizes agents in the proof of Theorem \ref{thm:cw possible} reveals that the state is perfectly aligned across the different dimensions and that it takes an extreme value on each dimension. Upon observing the realization, both agents believe that the state is either low on all dimensions or high on all dimensions; they just disagree on which of these is most likely. The signal generates both polarization and issue alignment: learning that the state is perfectly aligned leads the agents to polarize. 

Must polarization always coincide with issue alignment and extreme beliefs? More generally, what types of information can generate polarization? To answer these questions we now characterize the signal realizations that can lead to limit coordinatewise polarization, focusing on two-dimensional state spaces. Since limit beliefs are pinned down by the priors and the identified set, the problem reduces to specifying conditions on the identified set. We first review the standard partial orders on multidimensional Euclidean space and then define the conditions that will be used in Theorem \ref{thm:cw conditions}, the characterization theorem. 

 For two vectors $x,y\in\mathbb{R}^d$, say $x\le y$ if $x_i\le y_i$ for all $i\in[d]$, and say $x\ll y$ if $x_i<y_i$ for all $i\in[d]$. For a set $\Gamma\in\mathbb{R}^d$ define $\min(\Gamma)=\{\,x\in \Gamma \mid \nexists y\in\Gamma\setminus\{x\} \;\text{s.t.}\; y\le x\,\}
$. Similarly, $\max(\Gamma)$ is the subset of vectors in $\Gamma$ such that no other vector in $\Gamma$ is weakly above them. We will be considering finite sets for which the min and max sets are always nonempty. Since $\Theta$ is a finite product space, its min and max sets are singletons. Let $\ubar\theta$ denote the unique minimal element in $\Theta$ and $\bar\theta$ denote the unique maximal element. For general subsets $\Gamma\subset\Theta$, $\min(\Gamma)$ and $\max(\Gamma)$ may contain more than one element.

\begin{figure}[t]
\centering

\begin{subfigure}[t]{0.47\textwidth}
  \centering
  \begin{tikzpicture}[xscale=1.5, yscale=1.2]
    \path[use as bounding box] (-0.5,-0.5) rectangle (2.3,2.3);

    \draw[->, thick] (-0.3,0) -- (2.2,0) node[right] {$x$};
    \draw[->, thick] (0,-0.3) -- (0,2.2) node[above] {$y$};

    \foreach \x/\lab in {0.5/$x_1$,1.0/$x_2$,1.5/$x_3$}{
      \draw (\x,-0.03) -- (\x,0.03) node[below] {\lab};
    }
    \foreach \y/\lab in {0.5/$y_1$,1.0/$y_2$,1.5/$y_3$}{
      \draw (-0.03,\y) -- (0.03,\y) node[left] {\lab};
    }

    \foreach \xx in {0.5,1.0,1.5}{
      \foreach \yy in {0.5,1.0,1.5}{
        \node[draw, circle, fill=black, inner sep=1.8pt] at (\xx,\yy) {};
      }
    }

    \draw[red, thick] 
      (0.25, 1.25) -- (1.75, 1.25) --   
      (1.75, 0.75) -- (1.25, 0.75) --   
      (1.25, 0.25) -- (0.75, 0.25) --   
      (0.75, 0.75) -- (0.25, 0.75) --   
      cycle;

  \end{tikzpicture}
  \caption{$\Gamma$ not spanning}
\end{subfigure}
\hfill
\begin{subfigure}[t]{0.47\textwidth}
  \centering
  \begin{tikzpicture}[xscale=1.5, yscale=1.2]
    \path[use as bounding box] (-0.5,-0.5) rectangle (2.3,2.3);

    \draw[->, thick] (-0.3,0) -- (2.2,0) node[right] {$x$};
    \draw[->, thick] (0,-0.3) -- (0,2.2) node[above] {$y$};

    \foreach \x/\lab in {0.5/$x_1$,1.0/$x_2$,1.5/$x_3$}{
      \draw (\x,-0.03) -- (\x,0.03) node[below] {\lab};
    }
    \foreach \y/\lab in {0.5/$y_1$,1.0/$y_2$,1.5/$y_3$}{
      \draw (-0.03,\y) -- (0.03,\y) node[left] {\lab};
    }

    \foreach \xx in {0.5,1.0,1.5}{
      \foreach \yy in {0.5,1.0,1.5}{
        \node[draw, circle, fill=black, inner sep=1.8pt] at (\xx,\yy) {};
      }
    }

    \draw[red, thick] (0.25,0.25) rectangle (0.75,1.75);
    
    \draw[red, thick] (1.25,0.25) rectangle (1.75,0.75);

  \end{tikzpicture}
  \caption{$\Gamma^C$ not spanning}
\end{subfigure}

\par\bigskip

\begin{subfigure}[t]{0.47\textwidth}
  \centering
  \begin{tikzpicture}[xscale=1.5, yscale=1.2]
    \path[use as bounding box] (-0.5,-0.5) rectangle (2.3,2.3);
    \draw[->, thick] (-0.3,0) -- (2.2,0) node[right] {$x$};
    \draw[->, thick] (0,-0.3) -- (0,2.2) node[above] {$y$};
    \draw (0.5,-0.03) -- (0.5,0.03) node[below] {$x_1$};
    \draw (1.5,-0.03) -- (1.5,0.03) node[below] {$x_2$};
    \draw (-0.03,0.5) -- (0.03,0.5) node[left] {$y_1$};
    \draw (-0.03,1.5) -- (0.03,1.5) node[left] {$y_2$};
    \foreach \xx/\yy in {0.5/0.5, 1.5/0.5, 0.5/1.5, 1.5/1.5}{
      \node[draw, circle, fill=black, inner sep=1.8pt] at (\xx,\yy) {};
    }
    \begin{scope}[rotate around={45:(1,1)}]
      \draw[red, thick] (0.1,0.75) rectangle (1.9,1.25);
    \end{scope}
  \end{tikzpicture}
  \caption{Diagonal: $\Gamma$, $\Gamma^C$ spanning}
\end{subfigure}
\hfill
\begin{subfigure}[t]{0.47\textwidth}
  \centering
  \begin{tikzpicture}[xscale=1.5, yscale=1.2]
    \path[use as bounding box] (-0.5,-0.5) rectangle (2.3,2.3);
    \draw[->, thick] (-0.3,0) -- (2.2,0) node[right] {$x$};
    \draw[->, thick] (0,-0.3) -- (0,2.2) node[above] {$y$};
    \draw (0.5,-0.03) -- (0.5,0.03) node[below] {$x_1$};
    \draw (1.5,-0.03) -- (1.5,0.03) node[below] {$x_2$};
    \draw (-0.03,0.5) -- (0.03,0.5) node[left] {$y_1$};
    \draw (-0.03,1.5) -- (0.03,1.5) node[left] {$y_2$};
    \foreach \xx/\yy in {0.5/0.5, 1.5/0.5, 0.5/1.5, 1.5/1.5}{
      \node[draw, circle, fill=black, inner sep=1.8pt] at (\xx,\yy) {};
    }
    \begin{scope}[rotate around={-45:(1,1)}]
      \draw[red, thick] (0.1,0.75) rectangle (1.9,1.25);
    \end{scope}
  \end{tikzpicture}
  \caption{Off-diagonal: $\Gamma$, $\Gamma^C$ spanning.}
\end{subfigure}

\caption{Spanning conditions.}
\label{fig:spanning-conditions}
\end{figure}

\begin{definition}[Spanning]
 A set $\Gamma\subset\Theta$ is \emph{spanning} if for every coordinate $i$ there exist $\theta,\theta'\in\Gamma$ such that $\theta_i=\ubar\theta_i,\theta'_i=\bar\theta_i$.
 \end{definition}

  A set $\Gamma\subset\Theta$ is \emph{spanning} if it obtains the lowest and highest values on each coordinate. We will show that for a set to polarize, both the set and its complement must be spanning. Figure \ref{fig:spanning-conditions} demonstrates this condition. The top panels show violations: the top left is a set that is not spanning (does not attain the maximal value on the $y$-coordinate), and the top right is a set whose complement is not spanning (does not attain the minimal value on the $x$-coordinate). The bottom panels show the unique subsets of the $2\times 2$ state space that are both spanning and have a spanning complement; these are what we call the ``diagonal" and the ``off-diagonal".

The second condition in Theorem \ref{thm:cw conditions} is that the identified set be balanced.

 \begin{definition}[Biased and balanced] $\Gamma\subset\Theta$ is \emph{biased downward} if there exists $\theta\in\Theta$, with $\theta\gg\ubar\theta$, such that $\Gamma$ contains all states strictly below $\theta$  and none of the states weakly above it.\footnote{Formally, $\{\theta'\in\Theta | \theta'\ll\theta\}\subset\Gamma$ and $\{\theta'\in\Theta | \theta'\ge\theta\}\cap \Gamma=\emptyset.$} $\Gamma\subset\Theta$ is \emph{biased upward} if there exists $\theta\in\Theta$, with $\theta\ll \bar\theta$  such that $\Gamma$ contains all states strictly above $\theta$ and none of the states weakly below it. $\Gamma$ is \emph{balanced} if it is not biased downward or upward. 
 \end{definition}

\begin{figure}[t]
\centering

\begin{subfigure}[t]{0.47\textwidth}
  \centering
  \begin{tikzpicture}[xscale=1.5, yscale=1.2]
    \path[use as bounding box] (-0.5,-0.5) rectangle (2.3,2.3);

    \draw[->, thick] (-0.3,0) -- (2.2,0) node[right] {$x$};
    \draw[->, thick] (0,-0.3) -- (0,2.2) node[above] {$y$};

    \draw (0.5,-0.03) -- (0.5,0.03) node[below] {$x_1$};
    \draw (1.5,-0.03) -- (1.5,0.03) node[below] {$x_2$};
    \draw (-0.03,0.5) -- (0.03,0.5) node[left] {$y_1$};
    \draw (-0.03,1.5) -- (0.03,1.5) node[left] {$y_2$};

    \foreach \xx/\yy in {0.5/0.5, 1.5/0.5, 0.5/1.5, 1.5/1.5}{
      \node[draw, circle, fill=black, inner sep=1.8pt] at (\xx,\yy) {};
    }

    \draw[red, thick] 
      (0.3,0.3) -- (1.7,0.3) -- (1.7,0.7) -- (0.7,0.7)
      -- (0.7,1.7) -- (0.3,1.7) -- cycle;
  \end{tikzpicture}
  \caption{Biased downward}
\end{subfigure}
\hfill
\begin{subfigure}[t]{0.47\textwidth}
  \centering
  \begin{tikzpicture}[xscale=1.5, yscale=1.2]
    \path[use as bounding box] (-0.5,-0.5) rectangle (2.3,2.3);

    \draw[->, thick] (-0.3,0) -- (2.2,0) node[right] {$x$};
    \draw[->, thick] (0,-0.3) -- (0,2.2) node[above] {$y$};

    \draw (0.5,-0.03) -- (0.5,0.03) node[below] {$x_1$};
    \draw (1.5,-0.03) -- (1.5,0.03) node[below] {$x_2$};
    \draw (-0.03,0.5) -- (0.03,0.5) node[left] {$y_1$};
    \draw (-0.03,1.5) -- (0.03,1.5) node[left] {$y_2$};

    \foreach \xx/\yy in {0.5/0.5, 1.5/0.5, 0.5/1.5, 1.5/1.5}{
      \node[draw, circle, fill=black, inner sep=1.8pt] at (\xx,\yy) {};
    }

    \draw[red, thick] (0.3,1.3) rectangle (1.7,1.7);
  \end{tikzpicture}
  \caption{Biased upward}
\end{subfigure}

\par\bigskip

\begin{subfigure}[t]{0.47\textwidth}
  \centering
  \begin{tikzpicture}[xscale=1.5, yscale=1.2]
    \path[use as bounding box] (-0.5,-0.5) rectangle (2.3,2.3);

    \draw[->, thick] (-0.3,0) -- (2.2,0) node[right] {$x$};
    \draw[->, thick] (0,-0.3) -- (0,2.2) node[above] {$y$};

    \foreach \x/\lab in {0.5/$x_1$, 1.0/$x_2$, 1.5/$x_3$}{
      \draw (\x,-0.03) -- (\x,0.03) node[below] {\lab};
    }
    \foreach \y/\lab in {0.5/$y_1$, 1.0/$y_2$, 1.5/$y_3$}{
      \draw (-0.03,\y) -- (0.03,\y) node[left] {\lab};
    }

    \foreach \xx in {0.5,1.0,1.5}{
      \foreach \yy in {0.5,1.0,1.5}{
        \node[draw, circle, fill=black, inner sep=1.8pt] at (\xx,\yy) {};
      }
    }

    \draw[red, thick] 
      (0.2, 0.2) -- (1.25, 0.2) --  
      (1.25, 0.75) -- (1.8, 0.75) -- 
      (1.8, 1.25) -- (1.25, 1.25) -- 
      (1.25, 1.25) -- (0.2, 1.25) -- 
      cycle;

  \end{tikzpicture}
  \caption{Biased downward}
\end{subfigure}
\hfill
\begin{subfigure}[t]{0.47\textwidth}
  \centering
  \begin{tikzpicture}[xscale=1.5, yscale=1.2]
    \path[use as bounding box] (-0.5,-0.5) rectangle (2.3,2.3);

    \draw[->, thick] (-0.3,0) -- (2.2,0) node[right] {$x$};
    \draw[->, thick] (0,-0.3) -- (0,2.2) node[above] {$y$};

    \foreach \x/\lab in {0.5/$x_1$, 1.0/$x_2$, 1.5/$x_3$}{
      \draw (\x,-0.03) -- (\x,0.03) node[below] {\lab};
    }
    \foreach \y/\lab in {0.5/$y_1$, 1.0/$y_2$, 1.5/$y_3$}{
      \draw (-0.03,\y) -- (0.03,\y) node[left] {\lab};
    }

    \foreach \xx in {0.5,1.0,1.5}{
      \foreach \yy in {0.5,1.0,1.5}{
        \node[draw, circle, fill=black, inner sep=1.8pt] at (\xx,\yy) {};
      }
    }

    \draw[red, thick] (0.35, 1.35) rectangle (0.65, 1.65); 
    \draw[red, thick] (1.35, 1.35) rectangle (1.65, 1.65); 
    \draw[red, thick] (1.35, 0.35) rectangle (1.65, 0.65); 

  \end{tikzpicture}
  \caption{Biased upward}
\end{subfigure}

\caption{Violations of the balance condition.}
\label{fig:balanced-violations}
\end{figure}

Geometrically, a set is biased downward if it is contained in a ``lower L," while its complement is contained in the corresponding ``upper L." That is, there exists a state such that, for the Euclidean coordinate system originating at that state, $\Gamma$ contains all states in the lower left quadrant, while $\Gamma^C$ contains all states in the upper right quadrant. Similarly, a set is biased upward if it covers the upper right quadrant while its complement covers the lower left quadrant. Figure \ref{fig:balanced-violations} presents  sets that are not balanced in the $2\times 2$ and $3\times 3$ state spaces. For example, the set in panel (a) is biased downward because it contains all states strictly below $(x_2,y_2)$ and none of the states weakly above it. Note that in the $2\times 2$ state space, and generally in any $n\times n$ state space, both the diagonal and the off-diagonal (panels (c) and (d) of Figure \ref{fig:spanning-conditions}) are balanced.\footnote{In an $n\times n$ state space we define the diagonal as $\{(x_1,y_1),(x_2,y_2),\ldots, (x_n,y_n)\}$ and the off-diagonal as $\{(x_1,y_n),(x_2,y_{n-1}),\ldots,(x_n,y_1)\}$.} For any partition of the state space into quadrants, the diagonal  intersects both the upper-right and lower-left quadrants, while the off-diagonal does not cover either of them. 

The final condition in Theorem \ref{thm:cw conditions} is that the set be non-compensatory. We now define the negation of this condition. 

 \begin{definition}[Compensatory]
   $\Gamma\subset\Theta$ is \emph{compensatory} if there exists $\theta\in\Theta$, with $\theta\ll\bar\theta$ such that every state in $\Gamma$ is neither weakly below $\theta$ nor strictly above it. That is, for every $\theta'\in \Gamma$ neither $\theta'\le\theta$ nor $\theta'\gg\theta$ holds.\footnote{\label{fn:equivalent compensatory} Equivalently, there exists $\hat\theta\in\Theta$ with $\hat\theta\gg\ubar\theta$ such that for every $\theta'\in\Gamma$ neither $\theta'\ge \hat\theta$ nor $\theta'\ll\hat \theta$ holds. If the condition as it appears in the text is satisfied for $\theta\ll\bar\theta$ then the version in this footnote is satisfied for $\hat\theta=\min\{\theta'\in\Theta|\theta'\gg\theta\}$.}
 \end{definition}

\begin{figure}[t]
\centering

\begin{subfigure}[t]{0.47\textwidth}
  \centering
  \begin{tikzpicture}[xscale=1.5, yscale=1.2]
    \path[use as bounding box] (-0.5,-0.5) rectangle (2.3,2.3);

    \draw[->, thick] (-0.3,0) -- (2.2,0) node[right] {$x$};
    \draw[->, thick] (0,-0.3) -- (0,2.2) node[above] {$y$};

    \foreach \x/\lab in {0.5/$x_1$, 1.0/$x_2$, 1.5/$x_3$}{
      \draw (\x,-0.03) -- (\x,0.03) node[below] {\lab};
    }
    \foreach \y/\lab in {0.5/$y_1$, 1.0/$y_2$, 1.5/$y_3$}{
      \draw (-0.03,\y) -- (0.03,\y) node[left] {\lab};
    }

    \foreach \xx in {0.5,1.0,1.5}{
      \foreach \yy in {0.5,1.0,1.5}{
        \node[draw, circle, fill=black, inner sep=1.8pt] at (\xx,\yy) {};
      }
    }

    \begin{scope}[rotate around={45:(0.75,1.25)}]
      \draw[red, thick] (0.25,1.05) rectangle (1.25,1.45);
    \end{scope}

    \draw[red, thick] (1.35,0.35) rectangle (1.65,0.65);

  \end{tikzpicture}
  \caption{Compensatory}
\end{subfigure}
\hfill
\begin{subfigure}[t]{0.47\textwidth}
  \centering
  \begin{tikzpicture}[xscale=1.5, yscale=1.2]
    \path[use as bounding box] (-0.5,-0.5) rectangle (2.3,2.3);

    \draw[->, thick] (-0.3,0) -- (2.2,0) node[right] {$x$};
    \draw[->, thick] (0,-0.3) -- (0,2.2) node[above] {$y$};

    \foreach \x/\lab in {0.5/$x_1$, 1.0/$x_2$, 1.5/$x_3$}{
      \draw (\x,-0.03) -- (\x,0.03) node[below] {\lab};
    }
    \foreach \y/\lab in {0.5/$y_1$, 1.0/$y_2$, 1.5/$y_3$}{
      \draw (-0.03,\y) -- (0.03,\y) node[left] {\lab};
    }

    \foreach \xx in {0.5,1.0,1.5}{
      \foreach \yy in {0.5,1.0,1.5}{
        \node[draw, circle, fill=black, inner sep=1.8pt] at (\xx,\yy) {};
      }
    }

    \begin{scope}[rotate around={45:(0.75,1.25)}]
      \draw[red, thick] (0.25,1.05) rectangle (1.25,1.45);
    \end{scope}

    \draw[red, thick] (1.35,0.35) rectangle (1.65,0.65);

    \draw[red, thick] (0.35,0.35) rectangle (0.65,0.65);

  \end{tikzpicture}
  \caption{Non-compensatory}
\end{subfigure}

\caption{Compensatory and non-compensatory sets.}
\label{fig:compensatory}
\end{figure}

Geometrically, a set is compensatory if it is contained in off-diagonal quadrants, i.e., in the union of the upper-left and lower-right quadrants originating at some state. The left panel of Figure \ref{fig:compensatory} presents an example of such a set; the right panel presents an example of a non-compensatory set, as that set intersects the bottom-left quadrant. 

Finally, we introduce a strict form of dominance. For distributions supported on a finite ordered set $X$, let $P^L \prec^*_{st} P^H$ denote \emph{strong stochastic dominance}, defined by $F^L(x) > F^H(x)$ for all $\min (X) \le x < \max (X)$. We write $P^L \prec^*_{cw} P^H$ if $P^L_i \prec^*_{st} P^H_i$ for all dimensions $i$. We say that \emph{strong coordinatewise polarization} occurs if 
\[Q^L\prec_{cw} P^L\prec^*_{cw} P^H\prec_{cw} Q^H.
\]
We say that an identified set $\Gamma$ can strongly coordinatewise polarize if there exist priors for which strong coordinatewise polarization occurs for a partitional signal with realization $\Gamma$. Relative to coordinatewise polarization, the strong version rules out the knife-edge cases in which the marginal cdfs of $P^L$ and $P^H$ coincide at interior points of their supports.

We are now ready to state the characterization theorem. Here and below we let $\Gamma^C=\Theta\setminus\Gamma$ denote the complement of $\Gamma$ in $\Theta$. 

\begin{theorem}\label{thm:cw conditions}
    Let $\Theta\subset\mathbb{R}^2$. An identified set $\Gamma\subset\Theta$ can strongly coordinatewise polarize if and only if all of the following hold: (i) $\Gamma$ and $\Gamma^C$ are spanning, (ii) $\Gamma$ is balanced, (iii) $\Gamma$ is non-compensatory. 
\end{theorem}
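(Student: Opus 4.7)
The plan is to prove necessity and sufficiency separately. Necessity breaks into three independent contradiction arguments, one per condition. \textbf{Spanning} is immediate: since $Q^L$ is supported on $\Gamma$, if $\Gamma$ misses the minimum value $\ubar\theta_i$ on some coordinate, then $Q^L$ puts zero mass at $\ubar\theta_i$ while the full-support $P^L$ does not, violating the cdf inequality implied by $Q^L\preceq_{cw} P^L$; the mirror argument at the top shows $\Gamma^C$ must span. For \textbf{non-compensatory}, I would apply the generating-function characterization (Lemma \ref{lemma:cw generating}) to $u(\theta)=\ind{\theta_1>a}+\ind{\theta_2>b}$ at a compensatory point $\theta^*=(a,b)\ll\bar\theta$. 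By definition of compensatory, $u\equiv 1$ on $\Gamma$ and hence on the supports of $Q^L,Q^H$, so $E_{Q^L}[u]=E_{Q^H}[u]=1$. The inequalities $E_{P^H}[u]\le E_{Q^H}[u]$ and $E_{Q^L}[u]\le E_{P^L}[u]$ give $P^H(\theta_1>a)+P^H(\theta_2>b)\le 1\le P^L(\theta_1>a)+P^L(\theta_2>b)$. But $P^L\prec^*_{cw} P^H$ yields strict coordinate inequalities at $a<\bar\theta_1,\,b<\bar\theta_2$, hence a strict reversal of the sum ordering---contradiction.

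For \textbf{balance}, suppose without loss that $\Gamma$ is biased downward at $\theta^*=(a,b)$, and partition $\Theta$ into six cells by the indicators $\ind{\theta_1\ge a},\,\ind{\theta_2\ge b}$ and membership in $\Gamma$; bias forces the lower-rectangle cell $C=\{\theta_1<a,\theta_2<b\}$ to sit entirely in $\Gamma$ and the upper-rectangle cell $D=\{\theta_1\ge a,\theta_2\ge b\}$ to sit entirely in $\Gamma^C$. Denoting the $P^H$-masses of the six cells by $\alpha,\beta,\gamma,\delta,\epsilon,\phi$ (with $\gamma,\delta$ the masses of $C,D$), the constraint $Q^H(\theta_i\ge a_i)\ge P^H(\theta_i\ge a_i)$ at each coordinate unfolds to $\alpha\ge(\alpha+\beta+\gamma)(\alpha+\delta+\epsilon)$ and $\beta\ge(\alpha+\beta+\gamma)(\beta+\delta+\phi)$. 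Summing and using $\delta+\epsilon+\phi=1-(\alpha+\beta+\gamma)$ to simplify, these collapse to $\gamma(\mu-1)\ge\mu\delta$ with $\mu:=\alpha+\beta+\gamma<1$. Full support makes $\gamma,\delta>0$ (since $\ubar\theta\in C$ and $\bar\theta\in D$), so the LHS is negative and the RHS positive---contradiction.

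For \textbf{sufficiency}, given $\Gamma$ satisfying (i)--(iii), the plan is to construct full-support priors $(P^L,P^H)$ explicitly. Writing $P^L=\alpha_L\mu^\Gamma_L+(1-\alpha_L)\mu^{\Gamma^C}_L$ with $\mu^S_L$ supported on $S$ (and similarly for $H$), the posterior inequalities reduce (since $Q^L=\mu^\Gamma_L$) to the between-subsets conditions $\mu^\Gamma_L\preceq_{cw}\mu^{\Gamma^C}_L$ and $\mu^\Gamma_H\succeq_{cw}\mu^{\Gamma^C}_H$, decoupling the task into (a) designing ``low-on-$\Gamma$, high-on-$\Gamma^C$'' distributions for $L$ and the reverse for $H$, and (b) tuning mixture weights so that $P^L\prec^*_{cw} P^H$. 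The main obstacle---and where the three conditions earn their keep---is step (a): spanning supplies the extreme support points needed, non-compensatory allows both coordinate dominance relations to hold together, and balance rules out obstructions at interior thresholds. I would close the loop either by an explicit tilt construction along the $\theta_1+\theta_2$ axis (generalizing Example \ref{example:coordinatewise}, taking $P^L(\theta)\propto c-(\theta_1+\theta_2)$ and $P^H$ the mirror, possibly re-weighted inside $\Gamma$ to match its shape) and verifying all interior-threshold inequalities directly, or by a Farkas-type LP feasibility argument in which the three conditions emerge as the exact dual obstructions.
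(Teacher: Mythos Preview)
Your necessity arguments are correct. Spanning and non-compensatory match the paper's approach almost exactly (the paper routes both through Lemma~\ref{lemma:Gamma and complement}, but your direct version is equivalent). Your balance argument is genuinely different and quite clean: instead of the paper's function-based argument via $u(\theta)=\ind{\theta_1>\theta^*_1}+\ind{\theta_2>\theta^*_2}$ and Lemma~\ref{lemma:Gamma and complement}, you unfold the two cdf inequalities for $P^H$ directly in the six-cell masses and collapse them to $\gamma(\mu-1)\ge\mu\delta$, which is a neat self-contained contradiction. Both work; yours is slightly more computational but avoids the conditional-distribution machinery.

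Sufficiency is where you have a real gap. Your reduction to the pair of conditions $\mu^\Gamma_L\preceq_{cw}\mu^{\Gamma^C}_L$ and $\mu^\Gamma_H\succeq_{cw}\mu^{\Gamma^C}_H$ is exactly the content of the paper's Lemma~\ref{lemma:Gamma and complement}, so the decomposition is right. But two things are missing. First, step (b) cannot be done by ``tuning mixture weights'' alone: if $\mu^\Gamma_L$ and $\mu^\Gamma_H$ are chosen independently to satisfy only the within-agent constraints in (a), there is no reason for any mixture to yield $P^L\prec^*_{cw}P^H$. You need an additional cross-agent constraint (the paper imposes $P^L|_{\min(\Gamma)}\prec^*_{cw}P^H|_{\max(\Gamma)}$; see \eqref{eqn:conditional dominance}). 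Second, and more seriously, neither of your proposed constructions for (a) is actually carried out. The tilt $P^L(\theta)\propto c-(\theta_1+\theta_2)$ does not in general produce $P^L|_\Gamma\preceq_{cw}P^L|_{\Gamma^C}$ for arbitrary $\Gamma$ satisfying (i)--(iii); ``possibly re-weighted inside $\Gamma$'' is where all the difficulty hides. The Farkas suggestion is just a name: you would need to show that the dual infeasibility certificates correspond precisely to violations of spanning, balance, and non-compensatory, and this is not obvious. The paper's actual construction is substantial: it introduces an antichain-dominance order (Definition~\ref{def:AC dominance}), proves by induction on $|\Theta|$ that dominated antichain pairs admit strongly $\prec^*_{cw}$-ordered full-support distributions (Lemma~\ref{lemma:ac dominance}), verifies that the pairs $(\min\Gamma,\max\Gamma)$, $(\min\Gamma,\max\Gamma^C)$, $(\min\Gamma^C,\max\Gamma)$ satisfy this order precisely under (i)--(iii) (Lemma~\ref{lemma:verifying ac}), and only then mixes with small-$\delta$ uniforms. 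This induction is the heart of the sufficiency proof, and nothing in your proposal substitutes for it.
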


The first step to proving Theorem \ref{thm:cw conditions} is rewriting the definition of polarization purely in terms of priors. Lemma \ref{lemma:Gamma and complement} shows that an agent's posterior belief moves down in the coordinatewise order upon learning that the identified set is $\Gamma$ if and only if their prior conditional on $\Gamma$ is coordinatewise dominated by their prior conditional on its complement.

\begin{lemma}\label{lemma:Gamma and complement}
    Let $P$ be a full support prior over $\Theta$ and $Q$ the posterior after observing a partitional signal with realization $\Gamma$. Let $P|_{\Gamma}, {P}|_{\Gamma^C}$ be the conditional distributions of $P$ over $\Gamma$ and $\Gamma^C$, respectively. Then $Q\preceq_{cw} P\iff P|_{\Gamma} \preceq_{cw}{P}|_{\Gamma^C}$. 
    \end{lemma}

The lemma implies that $\Gamma$ strongly coordinatewise polarizes given priors $P^L$ and $P^H$ if and only if the following three requirements are jointly satisfied: $P^L|_{\Gamma}\prec_{cw}P^L|_{\Gamma^C}$, $P^H|_{\Gamma}\succ_{cw} P^H|_{\Gamma^C}$, and $P^L\prec^*_{cw} P^H$. In words, the identified set must lead each agent to hold more extreme beliefs, relative to its complement, in the direction of their prior disagreement.   

The ``necessity direction'' of the proof of Theorem \ref{thm:cw conditions} establishes that if an identified set violates either the spanning or balance conditions, then individual updating is restricted: either the low agent's beliefs cannot move down in the coordinatewise order, or the high agent's beliefs cannot move up. For example, if $\Gamma$ is biased downward, then it cannot be the case that $P^H\prec_{cw}Q^H$. If an identified set satisfies these conditions but is compensatory, then---while it may be possible for some agent to update down on this identified set and for another agent to update up---it cannot be the case that the initially lower agent updates down while the high agent updates up simultaneously. This part of the proof uses the concept of generating functions, which are defined in Section \ref{sec:limits}.

Proving sufficiency is more involved. We show that if an identified set $\Gamma$ satisfies all three conditions, it is possible to construct priors $P^L$ and $P^H$ for which strong coordinatewise polarization occurs. The construction concentrates the mass of $P^L$ on $\min(\Gamma)$ and $\max(\Gamma^C)$ while concentrating the mass of $P^H$ on $\max(\Gamma)$ and $\min(\Gamma^C)$. Intuitively, the mass on $\min(\Gamma)$ versus $\max(\Gamma)$ ensures that $P^L\prec^*_{cw} P^H$ while the mass on $\max(\Gamma^C)$ versus $\min(\Gamma^C)$ ensures the correct updating directions: upon learning that the state is not in $\Gamma^C$, agent $L$ eliminates high states (moving beliefs down), while agent $H$ eliminates low states (moving beliefs up). The primary challenge lies in constructing priors such that the global requirement $P^L\prec^*_{cw}P^H$ does not conflict with the local structure on $\Gamma$ and $\Gamma^C$ needed to drive these opposing updates. A key step is Lemma \ref{lemma:ac dominance}, which is proved by induction on the size of the state space. Although the sufficiency proof does not provide the polarizing priors explicitly, the induction argument of Lemma \ref{lemma:ac dominance} can be adapted to construct them recursively. In general, the proof shows that many priors can be polarized by a given identified set.

We now make two technical remarks, and then discuss how the three conditions constrain the content of polarizing public information.

\begin{remark}[Strong coordinatewise polarization]
The fact that Theorem \ref{thm:cw conditions} applies to \textit{strong} coordinatewise polarization can be viewed as a genericity condition. The theorem characterizes the identified sets that can polarize priors $P^L$ and $P^H$, provided that the priors’ marginal cdfs do not coincide at interior points of their supports---a requirement that holds generically among full-support priors on $\Theta$. This strengthening of coordinatewise polarization is required only for the necessity direction, where it is used to establish the non-compensation condition.
\end{remark}

\begin{remark}[Higher dimensions]
While our theorem focuses on two-dimensional state spaces, the spanning condition remains necessary in higher dimensions: both the identified set and its complement must attain the extreme values along each dimension. We conjecture that suitable pairwise extensions of the balance and non-compensation conditions are necessary as well. However, our necessity arguments for these conditions, and especially our sufficiency proof, rely on the two-dimensional structure, so we do not have a complete characterization for $d>2$.
\end{remark}

\subsubsection*{Discussion}

The spanning and balance conditions restrict the information that agents receive about the \emph{magnitude} of the state, both within and across dimensions. Spanning requires that the signal preserve uncertainty about extreme values---on each dimension, both the minimal and maximal values must remain possible. At the same time, requiring that the complement of the identified set also be spanning implies that the signal must rule out some extreme states, thereby conveying information across dimensions: learning that the state is extreme on one dimension teaches the agents something about its value on another. The balance condition, however, limits the content of this information. While the signal may link dimensions, it cannot decisively indicate that the state is globally high or globally low.

The implication is that polarizing public information must allow for extreme positions on both sides of the disagreement while stopping short of resolving which side is correct. Signals that generate polarization must preserve the plausibility of very low and very high states on each dimension, yet avoid favoring one direction over another. This feature echoes journalistic norms of ``balanced coverage,'' which may appear counterintuitive given the common view that media bias is a primary driver of polarization. However, biased-media accounts typically operate through selective exposure, with different agents consuming different information that reinforces their prior beliefs \parencite{martin2017bias}. By contrast, we study polarization driven exclusively by \emph{public information}. In this setting, evidence that decisively favors one side cannot sustain polarization. Historical accounts suggest that balanced coverage can, in fact, contribute to disagreement. For instance, \textcite{oreskes2011merchants} document how the tobacco industry exploited the media’s adherence to the Fairness Doctrine to prolong controversy over the health risks of smoking, despite overwhelming scientific evidence.

The non-compensation condition rules out identified sets in which values above some threshold on one dimension are always paired with values below some threshold on the other. Geometrically, such sets are contained in the union of off-diagonal quadrants defined by that threshold, oriented from the top-left to the bottom-right of the state space. By contrast, non-compensatory sets cannot be contained in any union of such off-diagonal quadrants for any partition of the state space. In this sense, the condition rules out signals that reveal negative alignment between dimensions while allowing---though not requiring---that they reveal perfect positive alignment.\footnote{These quadrant-based notions of positive and negative alignment of multidimensional sets are related to stochastic orders of positive dependence of multivariate distributions, such as the positive quadrant order \parencite[e.g.,][]{shaked2007stochastic} and the recently proposed concentration along the diagonal order \parencite{basak2024similarity}. These orders, however, are typically defined for distributions with fixed marginals, and are therefore not well suited to our setting.}

In the $2\times2$ state space, the non-compensation condition implies that polarization necessarily coincides with increased issue alignment. Here, the diagonal and the off-diagonal are the only sets satisfying the spanning and balance conditions. Since the off-diagonal is compensatory, the diagonal is the unique polarizing set. While this exact coincidence is special to the $2\times2$ state space, the connection between polarization and issue alignment persists in larger spaces. In any two-dimensional state space, a subset that contains the extreme states $\ubar\theta$ and $\bar\theta$ is spanning, balanced, and non-compensatory. Thus, it can strongly coordinatewise polarize as long as its complement is spanning. In particular, in any $n\times n$ state space, the diagonal can polarize while the off-diagonal cannot. At the same time, there exist identified sets that satisfy all three conditions without increasing alignment. For example, in any two-dimensional state space larger than $2\times2$, the set $\Theta\setminus\{\ubar\theta,\bar\theta\}$ that contains all states \textit{except} the extremes satisfies all three conditions, yet reduces alignment in the sense that conditioning on this set shifts probability mass away from the diagonal.

As mentioned in the introduction, non-compensatory signals can arise when information reveals an underlying ideological structure linking different issues. Agents may learn that certain combinations of positions are internally inconsistent; for example, that one cannot be a Marxist when it comes to economics while endorsing a strictly traditionalist social ideology. Conditioning prior beliefs on the subset of states that are consistent with this ideological structure can generate polarization. While our presentation has focused on political beliefs, similar mechanisms may operate in other domains. In scientific contexts involving multiple related hypotheses, evidence about joint causes or correlations can likewise induce polarization by linking the hypotheses without resolving disagreement about their validity \parencite[see][for a discussion of scientific polarization]{o2018scientific}.

\section{Constraints on Bayesian Polarization}\label{sec:limits}

We have shown that a strong and practically relevant notion of polarization is fully consistent with Bayesian updating, even in the long-run limit. A natural question is whether Bayesian rationality imposes any constraints on polarization. To explore this, we consider two stricter notions: polarization in the multivariate stochastic dominance order and in the upper orthant order. We begin by defining the classes of sets that characterize these orders, as well as a third class that provides an alternative characterization of the coordinatewise order. 

A set $U\subset\Theta$ is an \emph{upper set} if $\theta\in U$ whenever $\theta\in\Theta$, $\theta\ge \delta$ and $\delta\in U$. A set $U\subset\Theta$ is an \emph{upper orthant} if it is of the form $\{\,\theta\in\Theta \mid \theta\gg a \,\}$ for some $a\in\mathbb{R}^d$. A set $U\subset\Theta$ is an \emph{upper projection} if it is of the form 
$\{\,\theta\in\Theta \mid \theta_i> a\,\}$ for some $a\in\mathbb{R}$ and $i\in[d]$.\footnote{Upper sets and upper orthants are standard in the stochastic orders literature; upper projections are not. Equivalently, these classes of sets can be defined over $\mathbb{R}^d$, in which case we would say that a set $U\subset \Theta$ is an upper set if there exists an upper set $\tilde U\subset\mathbb{R}^d$ such that $U=\Theta\cap\tilde U$.} 

\begin{definition}[Stochastic orders]\label{def:stochastic orders}
     Let $P^L,P^H$ be two probability measures over $\Theta$ and $P^L_i,P^H_i$ be their respective marginals on the $i$ coordinate. We say that:
     \begin{enumerate}
        \item $P^L$ is \emph{stochastically dominated} by $P^H$, denoted $P^L\preceq_{st} P^H$ if $P^L( U)\le P^H(U)$ for all upper sets $U\subset\Theta$.
        \item $P^L$ is \emph{upper orthant dominated} by $P^H$, denoted $P^L\preceq_{uo} P^H$ if $P^L( U)\le P^H(U)$ for all upper orthants $U\subset\Theta$.
        \item $P^L$ is \emph{coordinatewise dominated} by $P^H$, denoted $P^L\preceq_{cw} P^H$ if $P^L( U)\le P^H(U)$ for all upper projections $U\subset\Theta$. Equivalently, $P^L_i\preceq_{st} P^H_i$ for all $i\in[d]$.
     \end{enumerate}
\end{definition}

 Stochastic dominance and upper orthant dominance are the two standard generalizations of univariate stochastic dominance to multiple dimensions, with the former being the most common \parencite{shaked2007stochastic}.

It is immediate that every upper projection is an upper orthant, but not vice versa, and that every upper orthant is an upper set, but not vice versa. This proves the following result, which states that the orders in Definition \ref{def:stochastic orders} are ordered from the strongest, or most restrictive, to the weakest.\footnote{This relationship is also noted by \textcite{shaked2007stochastic}, who do not formally define coordinatewise dominance but mention that upper orthant dominance implies univariate stochastic dominance on each coordinate.} 
\begin{lemma} $P^L\preceq_{st} P^H\Longrightarrow  P^L\preceq_{uo} P^H \Longrightarrow  P^L\preceq_{cw} P^H$
    and none of the reverse implications hold. 
\end{lemma}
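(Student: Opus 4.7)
My plan is to establish the two forward implications by observing that each class of sets sits inside the next, and then to separate the three orders using two small counterexamples on $\Theta=\{0,1\}^2$.

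For the forward direction, I would first note that every upper projection $\{\theta:\theta_i>a\}$ is an upper orthant, obtained by setting $a_i=a$ and choosing each $a_j$ for $j\ne i$ strictly below $\min\Theta_j$. Second, every upper orthant $\{\theta:\theta\gg a\}$ is an upper set, since $\theta'\ge\theta\gg a$ implies $\theta'\gg a$. Hence the family of upper projections is contained in the family of upper orthants, which is contained in the family of upper sets. Requiring $P^L(U)\le P^H(U)$ on a strictly larger family is a strictly stronger condition, so $\preceq_{st}\Rightarrow\preceq_{uo}\Rightarrow\preceq_{cw}$ is immediate.

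For strictness, I would exhibit two pairs on $\Theta=\{0,1\}^2$. To show that coordinatewise dominance does not imply upper orthant dominance, I take $P^L=\tfrac12\delta_{(0,0)}+\tfrac12\delta_{(1,1)}$ (mass on the diagonal) and $P^H=\tfrac12\delta_{(0,1)}+\tfrac12\delta_{(1,0)}$ (mass on the anti-diagonal). Both marginals on each coordinate are uniform on $\{0,1\}$, so $P^L\preceq_{cw}P^H$ holds trivially (in fact with equality), yet the upper orthant $\{(1,1)\}$ has $P^L$-mass $1/2$ but $P^H$-mass $0$, violating $\preceq_{uo}$. To show that upper orthant dominance does not imply stochastic dominance, I take $P^H=\tfrac12\delta_{(0,0)}+\tfrac12\delta_{(1,1)}$ and $P^L=0.4\,\delta_{(0,0)}+0.3\,\delta_{(1,0)}+0.3\,\delta_{(0,1)}$. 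I would verify the inequality on the three non-trivial upper orthants $\{(1,1)\}$, $\{(1,0),(1,1)\}$, $\{(0,1),(1,1)\}$, giving $0.5\ge0$, $0.5\ge0.3$, and $0.5\ge0.3$ respectively. The only upper set in this space that is not an upper orthant is $\{(0,1),(1,0),(1,1)\}=\Theta\setminus\{(0,0)\}$, on which $P^H$ places mass $0.5$ while $P^L$ places mass $0.6$, so $\preceq_{st}$ fails.

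Nothing in either step is genuinely hard. The only point requiring a bit of care is the second counterexample: the inclusion--exclusion identity $P(\Theta\setminus\{(0,0)\})=P(\{(1,0),(1,1)\})+P(\{(0,1),(1,1)\})-P(\{(1,1)\})$ shows that the three upper orthant inequalities cannot be combined to recover the upper set inequality, because the mass at $(1,1)$ enters with a minus sign. The construction deliberately concentrates $P^H$ on the two extreme comparable states and spreads $P^L$ across the incomparable mid-level states to exploit precisely this gap.
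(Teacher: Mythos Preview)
Your argument is correct and matches the paper's: both derive the forward implications from the nesting of set families (upper projections $\subset$ upper orthants $\subset$ upper sets). The paper leaves the failure of the reverse implications as an assertion, so your explicit $\{0,1\}^2$ counterexamples go beyond what the paper actually writes out; both examples check out as stated.
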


The stochastic orders defined above belong to the class of \emph{integral stochastic orders} \parencite{whitt1986stochastic, muller1997stochastic}. An order $\succeq$ in this class is characterized by a \textit{generator} $\mathcal{U}$, which is a set of functions such that $P^L\preceq P^H$ if and only if $E^L[u]\le E^H[u]$ for all $u\in\mathcal{U}$. Here and below, $E^L[u]$ and $E^H[u]$ are the expectations of $u$ according to the distributions $P^L$ and $P^H$, respectively. Table \ref{tab:generating functions} summarizes the generators for the three orders. By an increasing function, we mean a function $u:\Theta\rightarrow\mathbb{R}$ that is increasing according to the order $\ge$. A product of nonnegative univariate increasing functions is a function $u:\Theta\rightarrow\mathbb{R}$ such that $u(\theta)=\prod_{i=1}^d u_i(\theta_i)$ where each $u_i$ is nonnegative and increasing on $\Theta_i$, and a sum of univariate increasing functions is defined similarly. The characterizations for $\succeq_{st}$ and $\succeq_{uo}$ are standard \parencite{shaked2007stochastic}, while the characterization for $\succeq_{cw}$ is derived in Lemma \ref{lemma:cw generating}.\footnote{By the linearity of expectation, if an order is generated by $\mathcal{U}$, it is also generated by the convex cone spanned by $\mathcal{U}$.}

\begin{table}[ht]
  \centering
  \caption{Generating Functions}
  \label{tab:generating functions}
  \begin{tabular}{@{}ll@{}}
    \toprule
    \textbf{Stochastic Order} & \textbf{Generator} \\
    \midrule
    \addlinespace[5pt]
    $\succeq_{st}$ & Increasing functions \\
    \addlinespace[5pt]
    $\succeq_{uo}$ & Products of nonnegative \\
                   & univariate increasing functions \\
    \addlinespace[5pt]
    $\succeq_{cw}$ & Sums of univariate increasing functions \\
    \addlinespace[3pt]
    \bottomrule
  \end{tabular}
\end{table}

\begin{lemma}\label{lemma:cw generating}
   The order $\succeq_{cw}$ is generated by the set \[\mathcal{U}=\{\, u\in\mathbb{R}^{\Theta}\mid u(\theta)=\sum_{i=1}^d u_i(\theta_i)\text{ where } u_i:\Theta_i\rightarrow\mathbb{R} \text{ are increasing}\,\}. \]
\end{lemma}

We now turn to polarization in the upper orthant order. The proof of Theorem \ref{thm:uo polarization} can easily be modified to apply to the closely related lower orthant order.
\begin{theorem}\label{thm:uo polarization}
 One-shot $\succeq_{uo}$-polarization is possible. Limit $\succeq_{uo}$-polarization is impossible. 
\end{theorem}

To understand the impossibility result, recall that limit beliefs concentrate on the identified set---the mass on all states in the identified set increases while the mass on all states outside the set vanishes. Consider a $2\times 2$ state space as in Example \ref{example:coordinatewise}. This state space has four upper orthants, illustrated in Figure \ref{fig:upper-orthants}. Any identified set $\Gamma$ that contains $\bar\theta=(x_2,y_2)$ cannot lead to limit upper orthant polarization, because for both agents the mass on the singleton upper orthant $U_4=\{\bar\theta\}$ increases after observing the signal. In this case, $Q^L$ is not upper orthant dominated by $P^L$, so upper orthant polarization does not occur. On the other hand, if the identified set does not include $\bar\theta$, then for both agents the mass on $U_4$ decreases. This implies that $P^H$ is not upper orthant dominated by $Q^H$, again precluding polarization. This argument extends to any state space $\Theta$, establishing the impossibility of limit upper orthant polarization.

\begin{figure}[t]
  \centering
  
  \begin{subfigure}[t]{0.47\textwidth}
    \centering
    \begin{tikzpicture}[xscale=1.5, yscale=1.2]
      \path[use as bounding box] (-0.5,-0.5) rectangle (2.3,2.3);

      \draw[->, thick] (-0.3,0) -- (2.2,0) node[right] {$x$};
      \draw[->, thick] (0,-0.3) -- (0,2.2) node[above] {$y$};
      
      \draw (0.5, -0.03) -- (0.5, 0.03) node[below] {$x_1$};
      \draw (1.5, -0.03) -- (1.5, 0.03) node[below] {$x_2$};
      \draw (-0.03, 0.5) -- (0.03, 0.5) node[left] {$y_1$};
      \draw (-0.03, 1.5) -- (0.03, 1.5) node[left] {$y_2$};
  
      \foreach \xx in {0.5,1.5}{
        \foreach \yy in {0.5,1.5}{
          \node[draw, circle, fill=black, inner sep=1.8pt] at (\xx,\yy) {};
        }
      }
      
      \draw[red, thick] (0.3,0.3) rectangle (1.7,1.7);
    \end{tikzpicture}
    \caption{$U_1$}
  \end{subfigure}
  \hfill
  \begin{subfigure}[t]{0.47\textwidth}
    \centering
    \begin{tikzpicture}[xscale=1.5, yscale=1.2]
      \path[use as bounding box] (-0.5,-0.5) rectangle (2.3,2.3);

      \draw[->, thick] (-0.3,0) -- (2.2,0) node[right] {$x$};
      \draw[->, thick] (0,-0.3) -- (0,2.2) node[above] {$y$};
      
      \draw (0.5, -0.03) -- (0.5, 0.03) node[below] {$x_1$};
      \draw (1.5, -0.03) -- (1.5, 0.03) node[below] {$x_2$};
      \draw (-0.03, 0.5) -- (0.03, 0.5) node[left] {$y_1$};
      \draw (-0.03, 1.5) -- (0.03, 1.5) node[left] {$y_2$};
      
      \foreach \xx in {0.5,1.5}{
        \foreach \yy in {0.5,1.5}{
          \node[draw, circle, fill=black, inner sep=1.8pt] at (\xx,\yy) {};
        }
      }

      \draw[red, thick] (0.3,1.3) rectangle (1.7,1.7);
    \end{tikzpicture}
    \caption{$U_2$}
  \end{subfigure}

  \par\bigskip

  \begin{subfigure}[t]{0.47\textwidth}
    \centering
    \begin{tikzpicture}[xscale=1.5, yscale=1.2]
      \path[use as bounding box] (-0.5,-0.5) rectangle (2.3,2.3);

      \draw[->, thick] (-0.3,0) -- (2.2,0) node[right] {$x$};
      \draw[->, thick] (0,-0.3) -- (0,2.2) node[above] {$y$};
      
      \draw (0.5, -0.03) -- (0.5, 0.03) node[below] {$x_1$};
      \draw (1.5, -0.03) -- (1.5, 0.03) node[below] {$x_2$};
      \draw (-0.03, 0.5) -- (0.03, 0.5) node[left] {$y_1$};
      \draw (-0.03, 1.5) -- (0.03, 1.5) node[left] {$y_2$};
      
      \foreach \xx in {0.5,1.5}{
        \foreach \yy in {0.5,1.5}{
          \node[draw, circle, fill=black, inner sep=1.8pt] at (\xx,\yy) {};
        }
      }

      \draw[red, thick] (1.3,0.3) rectangle (1.7,1.7);
    \end{tikzpicture}
    \caption{$U_3$}
  \end{subfigure}
  \hfill
  \begin{subfigure}[t]{0.47\textwidth}
    \centering
    \begin{tikzpicture}[xscale=1.5, yscale=1.2]
      \path[use as bounding box] (-0.5,-0.5) rectangle (2.3,2.3);

      \draw[->, thick] (-0.3,0) -- (2.2,0) node[right] {$x$};
      \draw[->, thick] (0,-0.3) -- (0,2.2) node[above] {$y$};
      
      \draw (0.5, -0.03) -- (0.5, 0.03) node[below] {$x_1$};
      \draw (1.5, -0.03) -- (1.5, 0.03) node[below] {$x_2$};
      \draw (-0.03, 0.5) -- (0.03, 0.5) node[left] {$y_1$};
      \draw (-0.03, 1.5) -- (0.03, 1.5) node[left] {$y_2$};
      
      \foreach \xx in {0.5,1.5}{
        \foreach \yy in {0.5,1.5}{
          \node[draw, circle, fill=black, inner sep=1.8pt] at (\xx,\yy) {};
        }
      }

      \draw[red, thick] (1.3,1.3) rectangle (1.7,1.7);
    \end{tikzpicture}
    \caption{$U_4$}
  \end{subfigure}

  \caption{The four upper orthants in a $2 \times 2$ state space.}
  \label{fig:upper-orthants}
\end{figure}

With partitional signals, agents must update in the same direction on any singleton event. By contrast, general signals allow beliefs to move in opposite directions even on singletons. We show that this allows for one-shot upper orthant polarization. The proof builds on the observation that an agent's belief on a state increases if and only if the state's likelihood exceeds the expected likelihood according to the agent's prior. This is a special case of the following simple result. 

\begin{lemma}\label{lemma:posterior increase}
    Let $P\in\Delta(\Theta)$ be a prior belief with full support over $\Theta$ and $Q\in\Delta(\Theta)$ be the corresponding posterior after observing a realization $x$ of a signal $X$ with likelihood $\ell(\theta)=\Pr[X=x\mid \theta]$. Then, for every nonempty $A\subset \Theta$: $Q(A)>P(A)$ if and only if $E_P[\ell|A]>E_P[\ell]$.
\end{lemma}

 To see why one-shot upper orthant polarization is possible, consider the $2\times 2$ space and a signal realization $x$ with the likelihood function $\ell$ presented in Figure \ref{fig:likelihood uo}. If $L$'s prior mostly concentrates on $\ubar\theta$ (where $\ell=1$), their expected likelihood is close to $1$, causing their belief on $\bar\theta$ (where $\ell=1/2$) to fall after observing $x$. Conversely, we can construct $H$'s prior so that their expected likelihood is below $1/2$, so that their belief on $\bar\theta$ increases after observing $x$. With suitable priors, this divergence on the singleton upper orthant extends to all upper orthants. Note that unlike the symmetric signals that generate limit coordinatewise polarization, here the likelihoods of $\ubar\theta$ and $\bar\theta$ differ; this asymmetry enables one-shot polarization but also identifies the state. Thus, if the agents observe an infinite sequence of i.i.d.\@ signals, their beliefs will eventually concentrate on the true state.

\begin{figure}[t]
  \centering
  \begin{minipage}[b]{0.48\textwidth}
    \centering
    \begin{tikzpicture}[scale=1.5]
      \draw[->, thick] (-0.3,0) -- (2.2,0) node[right] {$x$};
      \draw[->, thick] (0,-0.3) -- (0,2.2) node[above] {$y$};
      \draw[thick] (0,0) rectangle (2,2);
      \draw[thick] (1,0) -- (1,2);
      \draw[thick] (0,1) -- (2,1);
      \draw (0.5,-0.03) -- (0.5,0.03) node[below] {$x_1$};
      \draw (1.5,-0.03) -- (1.5,0.03) node[below] {$x_2$};
      \draw (-0.03,0.5) -- (0.03,0.5) node[left] {$y_1$};
      \draw (-0.03,1.5) -- (0.03,1.5) node[left] {$y_2$};
      \node at (0.5,0.5)  {\textbf{1}};
      \node at (1.5,0.5)  {\textbf{0}};
      \node at (0.5,1.5)  {\textbf{0}};
      \node at (1.5,1.5)  {\textbf{1/2}};
      \node[anchor=north east, font=\large\bfseries] at (0,0) {$\ell(\theta)$};
    \end{tikzpicture}
  \end{minipage}
   \caption{Likelihood function for upper orthant polarization.}
  \label{fig:likelihood uo}
\end{figure} 

Since the upper orthant order is weaker than stochastic dominance, the impossibility of limit upper orthant polarization implies that limit stochastic dominance polarization is impossible as well. The following result then completes Table \ref{tab:possible}. 

\begin{theorem}\label{thm:st impossible}
 One-shot $\succeq_{st}$-polarization is impossible. 
\end{theorem}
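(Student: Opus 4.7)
The plan is to proceed by contradiction: assume strict polarization $Q^L \prec_{st} P^L \prec_{st} P^H \prec_{st} Q^H$ holds and derive that the likelihood $\ell$ must be constant on $\Theta$, which will contradict the strictness requirement $Q^L \neq P^L$. The workhorse identity throughout will be
\[ Q^i(\theta) - P^i(\theta) \;=\; \frac{P^i(\theta)}{E^i[\ell]}\bigl(\ell(\theta) - E^i[\ell]\bigr), \]
valid because the priors have full support and $E^i[\ell] > 0$. In particular, $Q^i(\theta) \lessgtr P^i(\theta)$ exactly as $\ell(\theta) \lessgtr E^i[\ell]$.

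The first step is to pin down $\ell$ at the two extremes of $\Theta$. I would apply the polarization inequality to two specific upper sets: the singleton $\{\bar\theta\}$ and the complement $\Theta \setminus \{\ubar\theta\}$. Each is an upper set because $\bar\theta$ and $\ubar\theta$ are the unique maximum and minimum of $\Theta$. The first yields $E^H[\ell] \le \ell(\bar\theta) \le E^L[\ell]$; the second yields $E^L[\ell] \le \ell(\ubar\theta) \le E^H[\ell]$. Chaining these forces $\ell(\bar\theta) = \ell(\ubar\theta) = E^L[\ell] = E^H[\ell]$; call this common value $c$.

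Next I would extend $\ell \equiv c$ to all of $\Theta$ by induction along a reverse linear extension of the product order. Enumerate $\Theta = \{\theta_1, \dots, \theta_N\}$ so that each $\theta_k$ is a maximal element of $\Theta \setminus \{\theta_1, \dots, \theta_{k-1}\}$; this is always possible because $\Theta$ is finite, and it guarantees that every prefix $A_k = \{\theta_1, \dots, \theta_k\}$ is an upper set (starting from $A_1 = \{\bar\theta\}$). If $\ell(\theta_j) = c$ for all $j < k$, the workhorse identity gives $Q^i(\theta_j) = P^i(\theta_j)$ for those $j$, so applying polarization to $A_k$ collapses to the singleton inequalities $Q^L(\theta_k) \le P^L(\theta_k)$ and $Q^H(\theta_k) \ge P^H(\theta_k)$. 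These in turn give $\ell(\theta_k) \le c$ and $\ell(\theta_k) \ge c$, hence $\ell(\theta_k) = c$. By induction $\ell \equiv c$, so $Q^i = P^i$ for both agents, contradicting $Q^L \neq P^L$.

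The main obstacle is organizing the induction so that at each step the polarization inequality reduces to a statement about a single state. Relying only on the extremal upper sets pins down $\ell(\bar\theta)$ and $\ell(\ubar\theta)$ but leaves intermediate values free. The reverse linear extension is the right device because each $A_k$ is upper and differs from $A_{k-1}$ by exactly one new state; the already-established equality $\ell \equiv c$ on $A_{k-1}$ then peels off cleanly from the polarization inequality on $A_k$, leaving the sandwich $c \le \ell(\theta_k) \le c$.
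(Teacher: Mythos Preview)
Your proof is correct and follows essentially the same approach as the paper: both argue by contradiction and induct along a chain of upper sets that grows one state at a time, using the polarization inequalities on $\{\bar\theta\}$ and $\Theta\setminus\{\ubar\theta\}$ for the base case and then sandwiching the new state's likelihood (or posterior) between equal bounds at each step. Your organization is slightly more streamlined---you establish $E^L[\ell]=E^H[\ell]=c$ once at the base step and then work directly with the likelihood along a single reverse linear extension, whereas the paper inducts over all upper sets of a given size and invokes its Lemma~\ref{lemma:abstract polarization} at every step---but the underlying idea is identical.
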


Theorem \ref{thm:st impossible} extends Theorem 1 of BHK from finite one-dimensional state spaces to arbitrary finite multidimensional state spaces.\footnote{The proof does not rely on the assumption that $\Theta$ is a product space.} We first briefly review their argument. Consider a finite one-dimensional state space $\theta_1<\theta_2<\ldots<\theta_n$. For polarization in stochastic dominance to occur, the low agent's beliefs must increase on the lowest state while the high agent's beliefs decrease, and the opposite must occur on the highest state.\footnote{That is, $Q^L(\theta_1)\ge P^L(\theta_1)$, $Q^H(\theta_1)\le P^H(\theta_1)$, while $Q^L(\theta_n)\le P^L(\theta_n)$, $Q^H(\theta_n)\ge P^H(\theta_n)$.} BHK show that if these occur simultaneously, then beliefs at these states do not move at all (we extend this argument in Proposition \ref{prop:abstract polarization} below). 
They then proceed by induction to show that beliefs are constant on every state---i.e., $Q^L(\theta_i)=P^L(\theta_i)$ and $P^H(\theta_i)=Q^H(\theta_i)$ for every $i\in[n]$---so polarization does not occur. 

Our proof adapts this strategy to multidimensional state spaces. Instead of inducting on states, we induct on the size of upper sets. That is, we show that for every upper set $U$, $Q^H(U)=P^H(U)$ and $Q^L(U)=P^L(U)$ by induction on the size of $U$. The key property is that if $U$ is an upper set of size $m$, then removing any of its minimal elements leads to an upper set of size $m-1$.

To understand why one-shot upper orthant polarization is possible while one-shot stochastic polarization is not, consider again the $2\times 2$ state space. Here, there is one upper set that is not an upper orthant: the ``upper L" presented in Figure \ref{fig:add} (recall that all upper orthants are upper sets, but not vice versa). For $\succeq_{st}$-polarization to occur, the high agent's beliefs must increase on the upper set $\{\bar\theta\}$ and decrease on $\{\ubar\theta\}$, which is the complement of the upper L. The opposite must occur for the low agent. However, this pattern of divergence violates a general restriction implied by Bayesian updating, which we formalize in the following proposition.

\begin{figure}[t]
  \centering
  \begin{tikzpicture}[xscale=1.5, yscale=1.2]
    \draw[->, thick] (-0.3,0) -- (2.2,0) node[right] {$x$};
    \draw[->, thick] (0,-0.3) -- (0,2.2) node[above] {$y$};
    \draw (0.5,-0.03) -- (0.5,0.03) node[below] {$x_1$};
    \draw (1.5,-0.03) -- (1.5,0.03) node[below] {$x_2$};
    \draw (-0.03,0.5) -- (0.03,0.5) node[left] {$y_1$};
    \draw (-0.03,1.5) -- (0.03,1.5) node[left] {$y_2$};

    \node[draw, circle, fill=black, inner sep=2pt] at (0.5,0.5) {};
    \node[draw, circle, fill=black, inner sep=2pt] at (1.5,0.5) {};
    \node[draw, circle, fill=black, inner sep=2pt] at (0.5,1.5) {};
    \node[draw, circle, fill=black, inner sep=2pt] at (1.5,1.5) {};

    \draw[red, thick]
      (1.7,1.7) -- (0.3,1.7) -- (0.3,1.3) -- (1.3,1.3)
      -- (1.3,0.3) -- (1.7,0.3) -- cycle;
  \end{tikzpicture}
  \caption{Upper set that is not an upper orthant.}
  \label{fig:add}
\end{figure}

\begin{proposition}
    \label{prop:abstract polarization}
    Let $P, P'\in\Delta(\Theta)$ be full support priors over a finite state space $\Theta$ and $Q,Q'$ be the corresponding posteriors following a signal realization $x$. Assume that the likelihood function $\ell(\theta)=\Pr[X=x|\theta]$ is not constant in $\theta$. Then:
    \begin{enumerate}
        \item There exist two states---the minimum and maximum likelihood states---on which beliefs move in the same direction, i.e., $\left(Q(\theta)-P(\theta)\right)\left(Q'(\theta)-P'(\theta)\right)>0$. On all other states, beliefs may move in opposite directions. 
        \item  If beliefs move in opposite directions, the divergence is consistent across states: if there exists a state $\theta\in\Theta$ for which $Q(\theta)-P(\theta)\le 0\le Q'(\theta)-P'(\theta)$, with at least one of these inequalities strict, then there does not exist a state $\theta'$ at which $Q(\theta')-P(\theta')\ge 0\ge Q'(\theta')-P'(\theta')$.
    \end{enumerate}
\end{proposition}

Part 2 of Proposition \ref{prop:abstract polarization} formalizes the ``consistency of direction" constraint suggested by the BHK proof. It explains why the upper L creates an impossibility: stochastic polarization requires agents to reverse their direction of disagreement between the minimal and maximal states. In contrast, upper orthant polarization does not require divergence on the upper L set, so while agents' beliefs do diverge on at least one individual state ($\bar\theta$), the direction of divergence is consistent across states. 

Part 1 of Proposition \ref{prop:abstract polarization} highlights that Bayesian updating otherwise imposes only mild constraints on \textit{where} beliefs can diverge, allowing them to move in opposite directions on all but two states.\footnote{Note that updating in opposite directions is not synonymous with polarization, as agents may update in opposite directions \textit{towards each other}. But it is straightforward to construct an example where agents' beliefs move \textit{away} from each other on all but two states.} This distinction helps clarify the mechanics of our earlier results. Polarization operates through two channels. The first is \textit{state-level polarization}: beliefs moving in opposite directions on individual states. The second is \textit{event-level polarization}: beliefs moving in the same direction at the state level, but accumulating differently on events due to differences in the magnitude of updates. In the short run, beliefs can diverge on many states but in the long-run limit they must move in the same direction on every state. Hence, one-shot polarization operates through both channels while limit polarization operates purely through the second channel.

\section{Robust Polarization in Aggregate Positions}

So far we have focused on polarization of multidimensional beliefs, analyzing notions that require beliefs to diverge on rich sets of events. Yet, in several applications of interest, agents' multidimensional beliefs are mapped into one-dimensional aggregate positions, and polarization is discussed in terms of these positions. For instance, if beliefs are over the optimal policy bundle on several issue dimensions as in Example \ref{example:coordinatewise}, with policies on each dimension ordered from most liberal to most conservative, one could also discuss polarization in terms of agents' overall ideological positions. This aggregation of beliefs into positions could be performed by the agents themselves or by an analyst. For example, the DW-NOMINATE scaling approach uses legislators’ roll-call voting records to estimate low-dimensional ideal points. These estimates are often interpreted as ideological positions and have played a central role in studies of polarization in the U.S. Congress \parencite{mccarty2016polarized}.

To formalize this type of aggregation within our model, we fix an \emph{aggregator} $u:\Theta\rightarrow \mathbb{R}$ and define the agents' prior \emph{positions} to be $E_{P^L}[u]$ and $E_{P^H}[u]$. That is, an agent's prior position is simply the expectation of $u$ under their prior. Likewise, we define their posterior positions as $E_{Q^L}[u]$ and $E_{Q^H}[u]$. We define two notions of polarization in positions upon observing a realization $x$ of the signal $X$.

\begin{definition}\label{def:position polarization}
  Given an aggregator $u:\Theta\rightarrow\mathbb{R}$, we say that $x$ leads to \textit{one-shot polarization in positions} if 
\begin{equation}\label{eqn:position polarization}
    E_{Q^L}[u]\le E_{P^L}[u]\le E_{P^H}[u]\le E_{Q^H}[u]
\end{equation}
  Given a set of aggregators $\mathcal{U}$, we say that $x$ leads to \textit{$\mathcal{U}$-robust one-shot polarization in positions} if it leads to one-shot polarization in positions for all $u\in \mathcal{U}$, and there is a $u\in\mathcal{U}$ for which all inequalities in \eqref{eqn:position polarization} are strict.
  \end{definition}

  Limit polarization in positions and $\mathcal{U}$-robust limit polarization in positions are defined analogously, replacing beliefs after the realization $x$ with limit beliefs following a sequence of i.i.d.\@ signals (or, equivalently, requiring that $X$ be a partitional signal).

Definition \ref{def:position polarization} considers two agents with differing beliefs whose positions are aggregated using the same aggregator $u$. This captures settings in which agents may disagree on issues but share a common convention for translating multidimensional beliefs into a one-dimensional position (e.g., two agents may disagree on many policies yet broadly agree on what constitutes a more liberal versus more conservative bundle overall). A primary motivation for considering $\mathcal{U}$-robust polarization is to remain agnostic about the exact form of the aggregator $u$. Robust polarization asks: given a class $\mathcal U$ of plausible aggregators, when can the same signal realization polarize aggregate positions for any $u\in\mathcal U$? 

The following result implies a direct mapping between polarization in beliefs and robust polarization in positions (the result follows almost immediately from the definition of a generator for an integral stochastic order in Section \ref{sec:limits}).\footnote{The only subtlety comes from the fact that our definition of polarization requires a strict ordering between the priors and between each prior and its posterior. This maps to the requirement that one of the functions $u\in\mathcal{U}$ leads to a strict inequality in \eqref{eqn:position polarization}. The requirement that $\mathcal{U}$ be closed under addition is mild since if $\succeq$ is generated by $\mathcal{U}$ then it is also generated by the convex cone of $\mathcal{U}$.}

\begin{proposition}\label{prop:positions}
A signal leads to one-shot (respectively, limit) belief polarization according to an integral stochastic order $\succeq$ if and only if it leads to one-shot (respectively, limit) $\mathcal{U}$-robust polarization in positions, where $\mathcal{U}$ is a generator of $\succeq$ that is closed under addition. \end{proposition}

Proposition \ref{prop:positions} converts our order-based possibility/impossibility results into statements about which classes of aggregators admit robust polarization in positions. The coordinatewise order is generated by the class of additively separable increasing functions. Thus, Theorem \ref{thm:cw possible} implies that for this class of aggregators, robust polarization in positions is possible both in the short run and in the limit. The upper orthant order is generated by nonnegative multiplicatively separable increasing functions. This class is not closed under addition, but  by Theorem \ref{thm:uo polarization}, robust polarization for its convex cone is possible only in the short run. In particular, there is less scope for robust polarization once the aggregator exhibits complementarities across dimensions (i.e., is multiplicatively but not additively separable). Finally, the stochastic dominance order is generated by all increasing functions. Thus, Theorem \ref{thm:st impossible} implies that for aggregators that exhibit strong complementarities (i.e., are not even multiplicatively separable), robust polarization in positions is impossible after any signal realization. That is, for any signal, there exists an increasing aggregator with sufficient complementarity under which aggregate positions will not diverge.

These observations highlight a distinction between informational conditions for belief polarization and conditions on the mapping from beliefs to positions. While information about issue alignment can generate belief polarization, strong complementarities in the aggregator may prevent it from translating into polarized positions. Robust polarization in positions is facilitated by information that ties different dimensions together and aggregators that are sufficiently separable across them.

\section{Conclusion}

Whether polarization is rational is widely debated. To make progress, we asked instead which \textit{types} of polarization are consistent with Bayesian updating from public information. For agents learning about a multidimensional state, we delineate the forms of belief divergence that are and are not possible in the short run and in the long-run limit. We find that persistent polarization of all marginal beliefs is consistent with Bayesian updating. On the other hand, we show that more demanding notions of polarization requiring divergence on combinations of issues  are either impossible or short-lived. Evidence of such polarization---for example, beliefs diverging both on the event that the state is at the pointwise minimum and the event that it is at the pointwise maximum---would point to a non-Bayesian explanation. 

Beyond this debate, the Bayesian model isolates the \textit{informational} forces generating polarization from psychological or affective ones. Our characterization result suggests that the empirically observed rise in issue alignment may not simply be a concurrent trend, but a driver of polarization.

\pagebreak

\titleformat{\section}
  {\normalfont\Large\bfseries}
  {Appendix \thesection:}{1em}{}

\begin{appendices}
  \section{Proofs}\label{appendix:proofs}

  This appendix contains the proofs of all results. The proof of Theorem \ref{thm:cw conditions} is in Subsection \ref{subsec:thm cw conditions}, along with the proofs of the lemmas used to prove this theorem. 
  \medskip

\begin{proof}[Proof of Lemma \ref{lemma:limit posterior}]
    Let $\mathcal{G}=\{\Gamma_1,\Gamma_2,\ldots,\Gamma_n\}$ be a partition of $\Theta$ according to the equivalence relation $\ell(\cdot;\theta)\equiv\ell(\cdot;\theta')$. Note that $\hat\theta\in \hat\Gamma\in \mathcal{G}$. For any $\Gamma\in\mathcal{G}$ let $\ell(\cdot;\Gamma)$ be the common likelihood function for states in $\Gamma$. Then, for any $\Gamma\in\mathcal{G}$, if the realizations of the first $t$ signals are $x_1,\ldots,x_t$,

    \[
    Q(t)(\Gamma) = \sum_{\theta\in\Gamma}\frac{P(\theta)\Pi_{\tau=1}^t\ell(x_\tau;\theta)}{\sum_{\theta'\in\Theta}P(\theta')\Pi_{\tau=1}^t\ell(x_\tau;\theta')} = \frac{P(\Gamma)\Pi_{\tau=1}^t\ell(x_\tau;\Gamma)}{\sum_{\Gamma'\in\mathcal{G}}P(\Gamma')\Pi_{\tau=1}^t\ell(x_\tau;\Gamma')}
    \]
    That is, the agent's posteriors over $\mathcal{G}$ are exactly as they would be in a learning problem where the state space is $\mathcal{G}$, the true state is $\hat\Gamma$, and the prior and likelihood function for each state $\Gamma$ are $P(\Gamma)$ and $\ell(\cdot;\Gamma)$, respectively. Note that for this state space it is the case that $\Gamma\neq \Gamma'\Longrightarrow \ell(\cdot;\Gamma)\neq \ell(\cdot;\Gamma')$ (i.e., there exists $x\in\mathcal{X}$ such that $\ell(x;\Gamma)\neq \ell(x;\Gamma')$). Thus, by Doob's Theorem for posterior consistency, $\lim_{t\rightarrow \infty} Q(t)(\hat\Gamma) = 1\text{ a.s.}$\footnote{The original Theorem is \textcite{doob1949application}, here we apply \textcite{miller2018detailed} Theorem 2.4. The theorem has some technical assumptions that are easily satisfied by our finite state space. The key assumption is that the state is identifiable: $\Gamma\neq \Gamma'\Longrightarrow \ell(\cdot;\Gamma)\neq \ell(\cdot;\Gamma')$.}  This implies that for any $\theta\notin\hat\Gamma$, $\lim_{t\rightarrow \infty} Q(t)(\theta) = 0 =Q^*(\theta)$ a.s. Finally, for any $\theta\in\hat\Gamma$, following any sequence $x_1,\ldots,x_t$ of signal realizations, 
 \[
    \frac{Q(t)(\theta)}{Q(t)(\hat\Gamma)} = \frac{P(\theta)\Pi_{\tau=1}^t\ell(x_\tau;\hat\Gamma)}{P(\hat\Gamma)\Pi_{\tau=1}^t\ell(x_\tau;\hat\Gamma)} = \frac{P(\theta)}{P(\hat\Gamma)}
    \]
    So, for any $t$, $Q(t)(\theta) = Q(t)(\hat\Gamma)\frac{P(\theta)}{P(\hat\Gamma)}$, which implies that $\lim_{t\rightarrow \infty} Q(t)(\theta) =\frac{P(\theta)}{P(\hat\Gamma)}=Q^*(\theta)$ a.s.\end{proof}

\begin{proof}[Proof of Theorem \ref{thm:cw possible}] 
Let $\varepsilon\in(0,1)$, set $\Gamma=\{\ubar\theta,\bar\theta\}$ where $\ubar\theta$ is the pointwise minimal state in $\Theta$ and $\bar\theta$ is the pointwise maximal state. Define
\[
P^L(\ubar\theta)=\frac{1}{|\Theta|}+\frac{\varepsilon}{|\Theta|},\, P^L(\bar\theta)=\frac{1}{|\Theta|}-\frac{\varepsilon}{|\Theta|},\text{ and } P^L(\theta)=\frac{1}{|\Theta|}\,\,\forall \theta\notin\Gamma 
\]
Define $P^H$ as a mirror image of $P^L$,
\[
P^H(\ubar\theta)=\frac{1}{|\Theta|}-\frac{\varepsilon}{|\Theta|},\, P^H(\bar\theta)=\frac{1}{|\Theta|}+\frac{\varepsilon}{|\Theta|},\text{ and } P^H(\theta)=\frac{1}{|\Theta|}\,\,\forall \theta\notin\Gamma\]
Let $Q^L$ and $Q^H$ be the agents' posteriors after learning that the identified set is $\Gamma$:
\[
\begin{split}
 Q^L(\ubar\theta)&=\frac{1+\varepsilon}{2},\, Q^L(\bar\theta)=\frac{1-\varepsilon}{2},\text{ and } Q^L(\theta)=0\,\,\forall \theta\notin\Gamma \\
Q^H(\ubar\theta)&=\frac{1-\varepsilon}{2},\, Q^H(\bar\theta)=\frac{1+\varepsilon}{2},\text{ and } Q^H(\theta)=0\,\,\forall \theta\notin\Gamma
\end{split}
\]
Let $i\in[d]$ be a coordinate, and let $n:=|\Theta_i|$ denote the number of elements on this coordinate. Then, the prior and posterior marginals on the $i$ coordinate are:
\[
\begin{split}
   \quad P^L_i &= \left(\frac{1}{n}+ \frac{\varepsilon}{|\Theta|}, \frac{1}{n},\ldots,\frac{1}{n},\frac{1}{n}-\frac{\varepsilon}{|\Theta|} \right),\quad  Q^L_i = \left(\frac{1+\varepsilon}{2}, 0,\ldots,0,\frac{1-\varepsilon}{2} \right)\\
     P^H_i &= \left(\frac{1}{n}-\frac{\varepsilon}{|\Theta|}, \frac{1}{n},\ldots,\frac{1}{n},\frac{1}{n}+\frac{\varepsilon}{|\Theta|} \right),\quad Q^H_i = \left(\frac{1-\varepsilon}{2}, 0,\ldots,0,\frac{1+\varepsilon}{2} \right)
\end{split}
\]
Note that $P^L_i\prec_{st}P^H_i$ for any $\varepsilon>0$, and that both  $Q^L_i\prec_{st} P^L_i$ and $P^H_i\prec_{st} Q^H_i$ hold if and only if $\varepsilon\ge \frac{|\Theta|(n-2)}{n(|\Theta|-2)}$ (recall that we assume $n=|\Theta_i|\ge 2$ and $|\Theta|>2$). The right-hand side of this inequality is strictly less than $1$ since $d\ge 2$ implies $|\Theta|\ge 2n>n$. Hence, there exists $\varepsilon\in(0,1)$ such that coordinatewise polarization occurs. 
\end{proof}

 \begin{proof}[Proof of Lemma \ref{lemma:cw generating}]
    Let $P^L,P^H\in\Delta(\Theta)$ be such that $P^L\preceq_{cw} P^H$. Let $u=\sum _{i=1}^d u_i\in\mathcal{U}$. Since $P^L_i\preceq_{st} P^H_i$ for all $i\in[d]$, we also have $E_{P^L_i}[u_i]\le E_{P^H_i}[u_i]$ for all $i\in[d]$. So, 
    \[
    E^L[u] = \sum_{i=1}^d E_{P^L_i}[u_i] \le \sum_{i=1}^d E_{P^H_i}[u_i] = E^H[u]
    \]

    For the other direction, assume that  $E^L[u]\le E^H[u]$ for all $u\in\mathcal{U}$. Let $j\in[d]$ and let $u_j:\Theta_j\rightarrow \mathbb{R}$ be an increasing function. Define $u_i\equiv 0$ for all $i\neq j$ and  $u:= \sum_{i=1}^d u_i$. Then $u(\theta)=u_j(\theta_j)$ for all $\theta$, so $E^L[u]\le E^H[u]$ implies $E_{P^L_j}[u_j]\le E_{P^H_j}[u_j]$. Since this holds for all $j\in[d]$ and for any increasing $u_j:\Theta_j\rightarrow \mathbb{R}$ we get $P^L_j\preceq_{st} P^H_j$ for all $j\in[d]$, so $P^L\preceq_{cw} P^H$. \end{proof}

\begin{proof}[Proof of Theorem \ref{thm:uo polarization}]
\paragraph{Impossibility of limit $\succeq_{uo}$-polarization: } Assume towards contradiction that limit $\succeq_{uo}$-polarization occurs for an identified set $\Gamma\subset\Theta$. Then $\Gamma\neq \Theta$, so for any $\theta\in\Gamma$ we have $Q^L(\theta)>P^L(\theta)$ and $Q^H(\theta)>P^H(\theta)$. The set $\{\bar\theta\}$ is an upper orthant in $\Theta$.  If $\bar\theta\in\Gamma$ then $Q^L(\bar\theta)>P^L(\bar\theta)$,  a contradiction to $Q^L\prec_{uo} P^L$. Conversely, if $\bar\theta\notin\Gamma$ then  $Q^H(\bar\theta)=0<P^H(\bar\theta)$,  a contradiction to $P^H\prec_{uo} Q^H$. 

\paragraph{Possibility of one-shot $\succeq_{uo}$-polarization:}
    Let $n\in\mathbb{N}$ and $\varepsilon\in(0,1)$. Define priors as follows:
    \[
    \begin{split}
    P^L(\ubar\theta)& = 1-\frac{1}{n}-\frac{1}{n^2}, P^L(\bar\theta) = \frac{1}{n^2}, \text{ and } P^L(\theta) = \frac{1}{n\left(|\Theta|-2\right)} \quad \forall\theta\notin \{\ubar\theta,\bar\theta\}\\
     P^H(\ubar\theta)& =\frac{1}{n^2}, P^H(\bar\theta) =  1-\frac{1}{n}-\frac{1}{n^2}, \text{ and } P^H(\theta) = \frac{1}{n\left(|\Theta|-2\right)}  \quad \forall\theta\notin \{\ubar\theta,\bar\theta\}
    \end{split}
    \]
    Let $x$ be a signal realization with likelihood function given by,
    \[
    \ell(\ubar\theta) =1, \ell(\bar\theta) =1-\varepsilon,\text{ and } \ell(\theta) = 0  \quad \forall\theta\notin \{\ubar\theta,\bar\theta\}
    \]
    Let $Q^L$ and $Q^H$ be the posteriors after observing the realization $x$. We will show upper orthant polarization occurs for sufficiently large $n$. We first show that $P^L\prec_{uo}P^H$. Let $U\subsetneq\Theta$ be an upper orthant. Then $\bar\theta\in U$ and $\ubar\theta\notin U$. So, if $|U|=k$ then, for $n>2$,
    \[
    P^L(U) = \frac{1}{n^2}+\frac{k-1}{n\left(|\Theta|-2\right)} <1- \frac{1}{n}- \frac{1}{n^2}+\frac{k-1}{n\left(|\Theta|-2\right)} = P^H(U)
    \]
    So, $P^L\prec_{uo}P^H$.

 We now show that $Q^L\prec_{uo} P^L$ by proving that for sufficiently large $n$, $Q^L(U)<P^L(U)$ for any upper orthant $U\subsetneq\Theta$. By Lemma \ref{lemma:posterior increase} below, this is equivalent to proving that for sufficiently large $n$, $E^L[\ell | U] < E^L[\ell]$ for any upper orthant $U\subsetneq\Theta$. Let $U\subsetneq\Theta$ be an upper orthant and let $k=|U|$. Then $\bar\theta\in U$ and $\ubar\theta\notin U$ imply, 
 \[
    E^L[\ell|U]  = \frac{1-\varepsilon}{1+\frac{n(k-1)}{|\Theta|-2}}\le 1-\varepsilon
\]
On the other hand, as $n\rightarrow\infty$
\[
    E^L[\ell]=  \left(1-\frac{1}{n}-\frac{1}{n^2}\right)\cdot 1  + \frac{1}{n^2}\cdot (1-\varepsilon) + \frac{1}{n} \cdot 0\rightarrow 1 
\]

Finally, we prove that for sufficiently large $n$, $P^H(U)<Q^H(U)$ for any upper orthant $U\subsetneq\Theta$. Again, let $U\subsetneq\Theta$ be an upper orthant and let $k=|U|$. As $n\rightarrow \infty$, 
\[
 E^H[\ell| U] =\left(1-\frac{1}{n}-\frac{1}{n^2}+\frac{k-1}{n\left(|\Theta|-2\right)}\right)^{-1}\left(\left(1-\frac{1}{n}-\frac{1}{n^2}\right)\cdot(1-\varepsilon) \right)\rightarrow 1-\varepsilon
\]
Since $E^H[\ell]$ is a weighted average of $E^H[\ell| U]$ and $E^H[\ell|\Theta\setminus U]$ to prove that $E^H[\ell|U]>E^H[\ell]$ it suffices to prove that $E^H[\ell|U]>E^H[\ell|\Theta\setminus U]$. This holds for sufficiently large $n$, since as $n\rightarrow\infty$, 
\[
E^H[\ell|\Theta\setminus U] = \frac{1}{1+\frac{n\left(|\Theta|-k-1\right)}{\left(|\Theta|-2\right)}}\rightarrow 0 
\]
For the limit, we used the fact that $|U|=k<|\Theta|-1$. To see this, first note that for any upper orthant $U\subsetneq \Theta$ it must be the case that $\ubar\theta\notin U$. So, if $|U|\ge|\Theta|-1$, then $U=\Theta\setminus\{\ubar\theta\}$. Let $a\in\mathbb{R}^d$ be the \emph{origin} of the upper orthant $U$, i.e., the point for which $U = \{\theta\in\Theta : \theta_1> a_1, \theta_2>a_2,\ldots,\theta_d>a_d\}$. Let $i\in[d]$. Since $U=\Theta\setminus\{\ubar\theta\} $ there exists a state $\theta\in U$ with $\theta_i=\ubar\theta_i$. Thus $a_i\le \theta_i$. Since this holds for all $i\in[d]$ we have $a\le \ubar\theta$, which implies $\ubar\theta\in U$, a contradiction. \end{proof}

\begin{proof}[Proof of Lemma \ref{lemma:posterior increase}]
Let $A\subset \Theta$ be a nonempty set. Since $Q$ is obtained by Bayesian updating with prior $P$ and likelihood $\ell$, we have, 
\[
\begin{split}
&Q(A)>P(A)\iff \frac{\sum_{\theta\in A} P(\theta)\ell(\theta)}{\sum_{\theta\in\Theta}P(\theta)\ell(\theta)}> P(A)\iff \frac{\sum_{\theta\in A} P(\theta)\ell(\theta)}{P(A)} >\sum_{\theta\in\Theta}P(\theta)\ell(\theta)\\
&\iff E_P[\ell |A] > E_P[\ell ]
\end{split}
\]
Here we used the fact that $P$ has full support and that $\ell(\theta)>0$ for at least one $\theta\in \Theta$ to multiply both sides of the inequality by $\frac{\sum_{\theta\in\Theta}P(\theta)\ell(\theta)}{P(A)}$. \end{proof}

\begin{proof}[Proof of Theorem \ref{thm:st impossible}]
Assume towards contradiction that there exists a signal realization $x$ that leads to one-shot $\succeq_{st}$-polarization. We will prove by induction on $n$ that for every upper set $U\subset \Theta$ of size $n$: $P^L(U)=Q^L(U)$ and $P^H(U)=Q^H(U)$. This contradicts the definition of $\succeq_{st}$-polarization, which requires the existence of upper sets $U$ and $U'$ such that $P^L(U)>Q^L(U)$ and $P^H(U')<Q^H(U')$.  
 
 \emph{Base case:} 
 Since $\{\bar\theta\}$ and $\Theta\setminus\{\ubar\theta\}$ are upper sets, the assumption $Q^L\prec_{st}P^L$ implies that $Q^L(\bar\theta)\le P^L(\bar\theta)$ and $Q^L(\ubar\theta)\ge P^L(\ubar\theta)$. Similarly, the assumption $ Q^H\succ_{st} P^H$ implies $Q^H(\bar\theta)\ge P^H(\bar\theta)$ and $Q^H(\ubar\theta)\le P^H(\ubar\theta)$. By Proposition \ref{prop:abstract polarization}, none of these inequalities can be strict, so 
 $Q^L(\bar\theta)=P^L(\bar\theta)$ and $P^H(\bar\theta)=Q^H(\bar\theta)$. Since $\{\bar\theta\}$ is the unique upper set of size $1$ this completes the base case.\footnote{This theorem also holds without the assumption that $\Theta$ is a product space, in which case it may not have a unique maximum. The only change is in the base case, where we would let $\bar\theta$ be some maximal element of $\Theta$ and $\ubar\theta$ be some minimal element of $\Theta$.}

\emph{Inductive step:} Let $1<n\le |\Theta|$, and assume that the inductive hypothesis holds for all $1\le n'< n$. Consider an upper set $U$ of size $n$. Let $\theta_U$ be a minimal element of $U$ and define $A= U\setminus \{\theta_U\}$. Then $A$ is an upper set of size $n-1$. The assumption that $Q^L\prec_{st}P^L$ implies $ P^L(A)+ P^L(\theta_U)=  P^L(U) \ge  Q^L(U) =  Q^L(A)+ Q^L(\theta_U)$. The inductive hypothesis implies $ P^L(A)=  Q^L(A)$, which implies $P^L(\theta_U)\ge  Q^L(\theta_U)$.
By an analogous argument $P^H(\theta_U)\le  Q^H(\theta_U)$. So, as in the base case, Proposition \ref{prop:abstract polarization} implies $P^L(\theta_U)=  Q^L(\theta_U)$ and $P^H(\theta_U)=  Q^H(\theta_U)$. Hence, $ P^L(U)= Q^L(U)$ and  $ P^H(U)= Q^H(U)$, which completes the induction argument. 
\end{proof}

\begin{proof}[Proof of Proposition \ref{prop:abstract polarization}]
Let $\theta^{-}\in\arg\min(\ell)$ and $\theta^{+}\in\arg\max(\ell)$ be states attaining the minimum and maximum likelihood values. Let $E_P[\ell],E_{P'}[\ell]$ denote the expectations of $\ell$ under $P$ and $P'$, respectively. By Lemma \ref{lemma:posterior increase}, for every $\theta\in\Theta$ we have $Q(\theta)>P(\theta)$ if and only if $\ell(\theta)>E_P[\ell]$, and similarly for $Q',P'$. Since $\ell$ is not constant, we have $\ell(\theta^{-})<\ell(\theta^{+})$. Because $P$ and $P'$ have full support, it follows that $\ell(\theta^{-})<\min\{E_P[\ell],E_{P'}[\ell]\}$ and $\ell(\theta^{+})>\max\{E_P[\ell],E_{P'}[\ell]\}$. Thus, for each $\theta\in\{\theta^{-},\theta^{+}\}$ we have $(Q(\theta)-P(\theta))(Q'(\theta)-P'(\theta))>0$. 

To see that beliefs on all other states may move in opposite directions, assume that $\ell(\theta^{-})=a$, $\ell(\theta^{+})=c$ and $\ell(\theta)=b$ for all other $\theta\in\Theta$, where $a<b<c$. It is easy to construct full-support priors with $E_P[\ell]>b>E_{P'}[\ell]$. For such priors, for every state $\theta\in\Theta\setminus\{\theta^{-},\theta^{+}\}$ we have $Q(\theta)-P(\theta)<0<Q'(\theta)-P'(\theta)$.

Finally, assume that there exists $\theta\in\Theta$ for which $Q(\theta)-P(\theta)<0\le Q'(\theta)-P'(\theta)$ (here the first inequality is the strict one w.l.o.g.). Then, $E_P[\ell]> \ell(\theta)\ge E_{P'}[\ell]$. So, there does not exist $\theta'\in\Theta$ at which $Q(\theta')-P(\theta')\ge0\ge Q'(\theta')-P'(\theta')$, as that would imply  $E_P[\ell]\le E_{P'}[\ell]$, a contradiction. 
\end{proof}

 \begin{proof}[Proof of Proposition \ref{prop:positions}] Let $\succeq$ be an integral stochastic order over $\Delta(\Theta)$ generated by a class $\mathcal{U}$ of functions $u:\Theta\rightarrow \mathbb{R}$. We prove the one-shot case; the limit case is analogous.  Assume that one-shot $\succeq$-polarization is possible. Then there exist priors and a signal realization such that $Q^L\prec P^L\prec P^H\prec Q^H$. By definition of a generator, this implies that \eqref{eqn:position polarization} holds for any $u\in\mathcal{U}$. It remains to prove that there exists $u\in\mathcal{U}$ for which all inequalities in \eqref{eqn:position polarization} are strict. Since $Q^L\prec P^L$, there exists $u_1\in\mathcal{U}$ such that $E_{Q^L}[u_1]<E_{P^L}[u_1]$ (otherwise $E_{Q^L}[u]\ge E_{P^L}[u]$ for all $u\in\mathcal{U}$, implying $Q^L\succeq P^L$, a contradiction). Similarly, there exist $u_2,u_3$ in $\mathcal{U}$ such that  $E_{P^L}[u_2]< E_{P^H}[u_2]$ and $E_{P^H}[u_3]<E_{Q^H}[u_3]$. Let $u=u_1+u_2+u_3$. Then, $u\in\mathcal{U}$ (since $\mathcal{U}$ is closed under addition) and is such that all inequalities in \eqref{eqn:position polarization} are strict. For the converse, assume that one-shot $\mathcal{U}$-robust polarization in positions is possible. Then, similar to above, since there exists a signal for which \eqref{eqn:position polarization} holds for any $u\in\mathcal{U}$, we have $ Q^L\preceq P^L\preceq P^H\preceq Q^H$. Since there exists a $u\in\mathcal{U}$ for which \eqref{eqn:position polarization} holds strictly, it must be the case that $ Q^L\prec P^L\prec P^H\prec Q^H$. 
\end{proof}

 \subsection{Proof of Theorem \ref{thm:cw conditions}}\label{subsec:thm cw conditions}
 
\begin{proof}[Proof of Theorem \ref{thm:cw conditions}]
Since $\Theta\subset\mathbb{R}^2$ we can assume $\Theta=\{x_1,\ldots,x_m\}\times\{y_1,\ldots,y_n\}$ for $m,n\ge2$, where $x_1<\ldots<x_m$ and $y_1<\ldots<y_n$. Throughout this proof, let $F^L_i,F^H_i$ denote the cdfs of the marginal distributions of $P^L$ and $P^H$, respectively on the $i$ coordinate. Let $F^L_i|_{\Gamma},F^L_i|_{\Gamma}^C$ denote the cdfs of the marginal distributions of $P^L|_\Gamma$ and $P^L|_{\Gamma^C}$, and similarly define $F^H_i|_{\Gamma},F^H_i|_{\Gamma^C}$.

The proof uses Lemma \ref{lemma:Gamma and complement} to translate polarization into a condition on priors, as well as the following two results on strong stochastic dominance and strong coordinatewise dominance. Proofs of all lemmas stated in the course of the theorem proof appear at the end of this subsection. 

\begin{lemma}\label{lemma:strong st expectation}
    Let $P^L$ and $P^H$ be two univariate distributions with a common finite support $X\subset \mathbb{R}$. Then $P^L\prec^*_{st}P^H$ implies that $E^L[u]<E^H[u]$ for any function $u:X\rightarrow \mathbb{R}$ that is weakly increasing and not constant. 
\end{lemma}

\begin{lemma} \label{lemma:strong cw expectation}
   Let $P^L,P^H\in\Delta(\Theta)$ be such that $P^L\prec^*_{cw} P^H$. Let $u:\Theta\rightarrow\mathbb{R}$, $u=\sum_i u_i$, be a sum of univariate increasing functions and assume that $u$ is not constant over $\Theta$. Then $E^L[u]<E^H[u]$. 
\end{lemma}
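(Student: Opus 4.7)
The plan is to reduce the multivariate claim to the univariate Lemma \ref{lemma:strong st expectation} by exploiting the separable form of $u$. Since each summand $u_i$ depends only on the $i$-th coordinate of $\theta$, linearity of expectation gives
\[
E^H[u] - E^L[u] \;=\; \sum_{i=1}^d \Bigl( E_{P^H_i}[u_i] - E_{P^L_i}[u_i]\Bigr),
\]
so the problem reduces to controlling each coordinate's contribution separately.

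Next, I would apply Lemma \ref{lemma:strong st expectation} coordinate by coordinate. By the definition of $\prec^*_{cw}$, the marginals satisfy $P^L_i \prec^*_{st} P^H_i$ on $\Theta_i$ for every $i\in[d]$. Each $u_i:\Theta_i\to\mathbb{R}$ is (weakly) increasing by hypothesis. For every coordinate $i$ at which $u_i$ is non-constant, Lemma \ref{lemma:strong st expectation} yields the strict inequality $E_{P^L_i}[u_i] < E_{P^H_i}[u_i]$; for coordinates at which $u_i$ is constant on $\Theta_i$, the two expectations trivially coincide. In particular, every summand in the displayed decomposition is nonnegative.

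Finally, I would invoke the hypothesis that $u$ itself is not constant on $\Theta$. If every $u_i$ were constant on its respective $\Theta_i$, then $u = \sum_i u_i$ would be constant on the product space $\Theta$, contradicting the assumption. Hence there is at least one index $i^*$ for which $u_{i^*}$ is non-constant on $\Theta_{i^*}$, and the corresponding summand is strictly positive. Combined with the nonnegativity of the other summands, this gives $E^H[u] - E^L[u] > 0$, as required.

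There is no real obstacle here beyond keeping the notation straight: the argument is a direct three-line reduction, with all the substantive work already done by Lemma \ref{lemma:strong st expectation}. The only point meriting a brief remark is the implication ``$u$ non-constant on $\Theta\;\Longrightarrow\; u_i$ non-constant on $\Theta_i$ for some $i$,'' which follows from the additive separability $u(\theta)=\sum_i u_i(\theta_i)$.
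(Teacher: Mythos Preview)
Your proof is correct and follows essentially the same route as the paper: decompose $E^H[u]-E^L[u]$ coordinatewise via separability, apply Lemma \ref{lemma:strong st expectation} to each marginal, and use that at least one $u_i$ is non-constant to obtain strictness. The only cosmetic difference is that you spell out the constant-$u_i$ case explicitly, whereas the paper folds it into a single weak-inequality claim.
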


\paragraph{The conditions are necessary:}
\begin{enumerate}[(i)]
\item \textit{$\Gamma$ cannot polarize if it is not spanning or if  $\,\Gamma^C$ is not spanning:} Assume that $\Gamma$ is not spanning and assume towards contradiction that coordinatewise polarization occurs. If $\Gamma$ does not obtain the minimal value on one of the coordinates, w.l.o.g the first, then $F^L_1|_{\Gamma}(x_1)=0<F^L_1|_{\Gamma^C}(x_1)$. By Lemma \ref{lemma:Gamma and complement}, this violates $Q^L\prec_{cw} P^L$, a contradiction. Similarly, if $\Gamma$ does not obtain the maximal value on the first coordinate then $F^H_1|_{\Gamma}(x_{m-1})=1<F^H_1|_{\Gamma^C}(x_{m-1})$, a contradiction to $P^H\prec_{cw} Q^H$. Similar arguments show that $\Gamma^C$ must be spanning as well. 

\item \textit{$\Gamma$ cannot polarize if it is not balanced:} Assume w.l.o.g. that $\Gamma$ is biased upward and assume towards contradiction that coordinatewise polarization occurs. Let $\theta^*\in\Theta$  be a state such that $\theta^*\ll \bar\theta$ and such that $\Gamma$ includes all states strictly above $\theta^*$ and none of the states weakly below it. Define the function $u:\Theta\rightarrow \mathbb{R}$ as $u(\theta)=u_1(\theta_1)+u_2(\theta_2)$ where
       \[u_1(x) = \begin{cases}
 0 & x\le\theta^*_1 \\
  1 & x> \theta^*_1
\end{cases}; \quad u_2(y) = \begin{cases}
 0 & y\le\theta^*_2 \\
  1 & y> \theta^*_2
\end{cases} 
\]
Note that $u(\theta)=0$ for all $\theta\le \theta^*$, $u(\theta)=2$ for all $\theta\gg\theta^*$, and $u(\theta)=1$ otherwise. So, the assumption that $\Gamma$ does not contain any state weakly below $\theta^*$ and contains all states strictly above it implies that $E^L[u|\Gamma]>1$ and $E^L[u|\Gamma^C]<1$.\footnote{Recall that these are the conditional expectations according to the prior $P^L$. Since $\theta^*\ll \bar\theta$, there exist states $\theta\in\Theta$ with $\theta\gg \theta^*$. So, the assumption that $P^L$ has full support  implies the strict inequalities.} Since $u$ is a sum of univariate increasing functions, and the coordinatewise order is generated by such functions (Lemma \ref{lemma:cw generating}), this violates $P^L|_{\Gamma}\prec_{cw}P^L|_{\Gamma^C}$, and therefore by Lemma \ref{lemma:Gamma and complement} violates $Q^L\prec_{cw} P^L$, a contradiction.

\item \textit{$\Gamma$ cannot polarize if it is compensatory:} Assume that $\Gamma$ is compensatory and assume towards contradiction that coordinatewise polarization occurs. Let $\theta^*\in\Theta$ be such that $\theta^*\ll\bar\theta$ and such that for every $\theta'\in \Gamma$ it is not the case that $\theta'\le\theta^*$ and it is not the case that $\theta'\gg \theta^*$. Then, for the function $u$ defined in the previous step we have $u(\theta)\equiv 1$ for every $\theta\in\Gamma$. Next, by Lemma \ref{lemma:strong cw expectation}, the strong coordinatewise order implies strict inequality of expectations for nontrivial sums of univariate increasing functions. That is, $P^L\prec^*_{cw} P^H$ and the fact that $u$ is not constant over $\Theta$ imply that  $E^L[u]<E^H[u]$. By Lemma \ref{lemma:Gamma and complement}, $Q^L\prec_{cw} P^L$ implies $P^L|_{\Gamma}\prec_{cw}P^L|_{\Gamma^C}$, which implies 
\[1=E^L[u|\Gamma]\le E^L[u|\Gamma^C]\]
Therefore, 
\[E^L[u]= P^L(\Gamma)E^L[u|\Gamma]+P^L(\Gamma^C)E^L[u|\Gamma^C]\ge 1\]
Similarly, $P^H\prec_{cw} Q^H$ implies   
\[1=E^H[u|\Gamma]\ge E^H[u|\Gamma^C]\]
which implies $E^H[u]\le 1$. So we have $E^L[u]\ge 1\ge E^H[u]$, a contradiction. 
\end{enumerate}

\paragraph{The conditions are sufficient:}
Let $\Gamma\subset\Theta$ be a set that satisfies all three conditions. We will establish the existence of priors $P^L,P^H\in\Delta(\Theta)$ such that strong coordinatewise polarization occurs upon observing $\Gamma$. The proof proceeds in three steps.

\medskip

\emph{\textbf{Step 1}: A method for constructing strongly coordinatewise ordered distributions.}

Recall that a subset of a partially ordered set is an \textit{antichain} if any two distinct elements in the set are incomparable to each other. We will consider antichains that are subsets of $\Theta$, under the order $\le$ defined above. We first define a binary dominance relation over such antichains. 
\begin{definition}\label{def:AC dominance}
    Given two nonempty antichains $A,B\subset \Theta\subset\mathbb{R}^2$ say that $B$ antichain dominates $A$ in $\Theta$, denoted $B>^{\Theta}_{AC} A$ if  the following hold:
    \begin{enumerate}
    \item $A$ and $B$ are disjoint. 
      \item $A$ hits the minimal value  on both coordinates (i.e., there exist $\theta,\theta'\in A$ such that $\theta_1=\ubar\theta_1$, $\theta'_2=\ubar\theta_2$) and $B$ hits the maximal value on both coordinates. 
    \item $A\cup B$ is non-compensatory. 
    \end{enumerate}

    \end{definition}

\begin{lemma}\label{lemma:ac dominance}
 Let $A,B\subset\Theta$ be two nonempty antichains such that $B>^{\Theta}_{AC}A$. Then, there exist distributions $P_A$ with full support on $A$ and $P_B$ with full support on $B$ such that $P_A\prec^*_{cw}P_B$.
\end{lemma}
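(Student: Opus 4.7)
Proceed by induction on $|\Theta| = mn$. Since $A, B \subset \mathbb{R}^2$ are antichains, write them in staircase order: $A = \{a^1,\ldots,a^k\}$ with $a^1_1 < \cdots < a^k_1$ and $a^1_2 > \cdots > a^k_2$, and similarly $B = \{b^1,\ldots,b^l\}$. The hits-min/max conditions then read $a^1_1 = x_1$, $a^k_2 = y_1$, $b^l_1 = x_m$, $b^1_2 = y_n$. Parametrize distributions by positive weights $(p_1,\ldots,p_k)$ (for $P_A$) and $(q_1,\ldots,q_l)$ (for $P_B$), each summing to one; then $P_A \prec^*_{cw} P_B$ unpacks into the $(m-1)+(n-1)$ strict linear inequalities $F^A_i(x) > F^B_i(x)$, one for each coordinate $i \in \{1,2\}$ and each non-maximal threshold $x \in \Theta_i$, where $F^A_i, F^B_i$ are the marginal CDFs.

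\textbf{Base case and easy reductions.} For $m=n=2$, Definition \ref{def:AC dominance} admits only finitely many admissible pairs $(A,B)$, and uniform weights on each antichain verify the required strict CDF inequalities in each case by direct computation. For the inductive step, two reductions dispatch the easy sub-cases. (i) If some row $y_j$ or column $x_i$ contains no element of $A\cup B$, then by hits-min/max this empty row or column is interior, so I delete it to obtain a smaller product space $\Theta'$. All three conditions of Definition \ref{def:AC dominance} pass to $\Theta'$: disjointness and hitting are clear, and non-compensatory holds because the set of test points $\theta \ll \bar\theta$ in $\Theta'$ is a subset of that in $\Theta$. Distributions supplied by the inductive hypothesis also witness strong coordinatewise dominance in $\Theta$, since the marginal CDFs are flat across the deleted coordinate value. (ii) If $|A|=1$, hits-min forces $A = \{\ubar\theta\}$, so both marginal CDFs of any full-support $P_A$ equal $1$ already at the minimum, while hits-max on $B$ keeps both marginal CDFs of $P_B$ strictly below $1$ at every non-maximal threshold; the case $|B|=1$ is symmetric.

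\textbf{Main case and obstacle.} It remains to handle $|A|, |B| \ge 2$ with every row and every column of $\Theta$ meeting $A \cup B$. The pivotal structural observation is that $A \cup B$ cannot itself be an antichain: if it were, its unique staircase-minimal element (minimum $x$-coordinate, maximum $y$-coordinate within $A \cup B$) would, by hits-min $x$, have to lie in $A$ and, by hits-max $y$, simultaneously lie in $B$, contradicting $A \cap B = \emptyset$. Hence some $a \in A$ and $b \in B$ are comparable. My plan is to leverage both this comparability and the non-compensatory condition to construct the weights directly: place bulk mass on the extreme staircase elements $\{a^1, a^k\}$ and $\{b^1, b^l\}$, then tune the interior weights so that the strict CDF inequalities hold at every cut. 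The main obstacle is the interleaved case (e.g., $a^k_1 > b^1_1$): the near-uniform construction on the extremes alone succeeds only when $A$ sits entirely weakly below $B$ and otherwise yields equality rather than strict inequality at some cuts. The delicate content of the argument is showing that non-compensatory always precludes a ``tight'' obstruction $E^A[u] = E^B[u]$ coming from a test function of the form $u(\theta) = \mathbf{1}[\theta_1 \le c_1] + \mathbf{1}[\theta_2 \le c_2]$ (precisely the obstruction exhibited in the necessity sketch), so there is always slack at some interior element that can be absorbed by a perturbation while preserving full support, closing the induction and, as the text foreshadows, recursively producing explicit polarizing priors.
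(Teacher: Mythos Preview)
Your setup, base case, and singleton reductions are fine and essentially match the paper's. The gap is in the main case. Once $|A|,|B|\ge 2$ you drop the induction and propose to build the weights directly, but you never actually do so: the sentence ``non-compensatory always precludes a `tight' obstruction \ldots\ so there is always slack at some interior element that can be absorbed by a perturbation'' is precisely the content of the lemma, restated as a hope. You do not identify which element carries the slack, how the perturbation is chosen, or why it preserves all of the other strict inequalities. Your observation that $A\cup B$ cannot itself be an antichain is correct but too weak: comparability of \emph{some} pair does not by itself break the interleaved configuration you correctly flag as the obstacle.

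The paper stays inductive in the main case as well. From non-compensatory it extracts the sharp disjunction (in your notation): either $a^1\le b^1$ and $a^1\le b^2$, or $b^1\ge a^1$ and $b^1\ge a^2$; if both fail, the point $\tau=(b^1_1,\max\{b^2_2,a^2_2\})$ witnesses that $A\cup B$ is compensatory. Under the first alternative one deletes $b^1$ from $B$ and shrinks $\Theta$ to the slab $\Theta'=\{\theta:\theta_2\le b^2_2\}$, verifies that $B\setminus\{b^1\}>^{\Theta'}_{AC}A$ (the inequality $a^1\le b^2$ is exactly what forces $A\subset\Theta'$), and applies the inductive hypothesis there to obtain $P_A\prec^*_{cw}P_{B'}$ on $\Theta'$ with a uniform gap $2\varepsilon$. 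Finally $P_B$ is the mixture $(1-\varepsilon)P_{B'}+\varepsilon\cdot(\text{point mass at }b^1)$: on coordinate~1 the mixing costs at most $\varepsilon$ against a gap of $2\varepsilon$; on coordinate~2 the new mass sits at $y_n$, so it strictly lowers $F^B_2$ at every $y_j<y_n$, converting the merely weak inequality at the slab boundary into the required strict one. This is the ``slack absorption'' you gesture at, but making it work requires knowing \emph{which} element to peel off and along \emph{which} axis to shrink---and that is exactly what the disjunction supplies and what your sketch is missing.
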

\begin{remark}
Our original definition of the order $\succeq^*_{cw}$ applied to two distributions fully supported on $\Theta$. Here, the distributions' supports are strictly contained in $\Theta$ but we still take $\succeq^*_{cw}$ to imply the strict marginal inequalities over $\Theta$. That is, $P_A\prec^*_{cw}P_B$ means that $F^A_1(x)>F^B_1(x)$ for each $x\in\Theta_1$ with $x<x_m$ and $F^A_2(y)>F^B_2(y)$ for each $y\in\Theta_2$ with $y<y_n$.     
\end{remark}

\medskip

\emph{\textbf{Step 2}: Applying the method to construct intermediate distributions.} 

We now apply Lemma \ref{lemma:ac dominance} to construct distributions $P^L|_{\min(\Gamma)}, P^L|_{\max(\Gamma^C)},P^H|_{\max(\Gamma)}$ and $ P^H|_{\min(\Gamma^C)}$, each supported on the set indicated in its subscript, such that the following conditions hold:\footnote{We have not yet defined the distributions $P^L$ and $P^H$ but will define them in the next step so that they are consistent with this notation. For example, $P^L|_{\min(\Gamma)}$ will be the conditional distribution of $P^L$ on the set $\min(\Gamma)$.}
\begin{equation}
\label{eqn:conditional dominance}
    \begin{cases}
        P^L|_{\min(\Gamma)} \prec^*_{cw} P^H|_{\max(\Gamma)}\\
        P^L|_{\min(\Gamma)} \prec^*_{cw} P^L|_{\max(\Gamma^C)}\\ 
        P^H|_{\min(\Gamma^C)}  \prec^*_{cw}P^H|_{\max(\Gamma)} 
    \end{cases}
\end{equation}

We will make use of the following result.  
\begin{lemma}\label{lemma:verifying ac}
    Under the conditions of Theorem \ref{thm:cw conditions} all of the following hold: 
    \begin{enumerate}
        \item $\min(\Gamma)<^{\Theta}_{AC}\max(\Gamma)$
        \item $\min(\Gamma)<^{\Theta}_{AC}\max(\Gamma^C)$
        \item $\min(\Gamma^C)<^{\Theta}_{AC}\max(\Gamma)$
    \end{enumerate}
\end{lemma}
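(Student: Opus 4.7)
The plan is to verify the three conditions of Definition \ref{def:AC dominance}---disjointness of $A$ and $B$, the ``hitting extremes'' condition on $A$ and $B$, and non-compensatoriness of $A\cup B$---for each of the three pairs in the lemma. I proceed condition by condition rather than claim by claim, because the non-compensatory step admits a unified argument via down- and up-sets.

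The hitting-extremes condition is quick: because $\Gamma$ is spanning by (i), $\Gamma$ contains a state with first coordinate $\ubar\theta_1$, and among such states the one with smallest second coordinate lies in $\min(\Gamma)$ and realizes $\ubar\theta_1$; repeating this for the other coordinate and using spanning of $\Gamma^C$ symmetrically for $\max(\Gamma)$, $\min(\Gamma^C)$, and $\max(\Gamma^C)$ covers all cases. Disjointness is immediate for claims (2) and (3) since the two antichains lie in $\Gamma$ and $\Gamma^C$. For claim (1), I will show that $\gamma\in\min(\Gamma)\cap\max(\Gamma)$ contradicts non-compensatoriness of $\Gamma$: such a $\gamma$ is isolated in $\Gamma$, so in two dimensions every other element of $\Gamma$ lies in the upper-left or lower-right quadrant of $\gamma$; a short case analysis (using that spanning forces $|\Gamma|\ge 2$ and rules out $\gamma\in\{\ubar\theta,\bar\theta\}$) then produces a point $\theta^\dagger\ll\bar\theta$ one grid step below $\gamma$ in a suitable coordinate, at which $\Gamma$ itself is compensatory.

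The heart of the proof is non-compensatoriness of $A\cup B$. Let $S_\theta=\{\theta'\in\Theta\mid\theta'\le\theta\}$ and $T_\theta=\{\theta'\in\Theta\mid\theta'\gg\theta\}$, and assume toward contradiction that $A\cup B$ is compensatory at some $\theta\ll\bar\theta$. The key observation is: if the ambient set of $A$ (either $\Gamma$ or $\Gamma^C$) intersected $S_\theta$, a $\le$-minimum of that intersection would necessarily lie in the corresponding $\min$-antichain, contradicting $A\cap S_\theta=\emptyset$; dually for $B$ and $T_\theta$ using $\le$-maxima. Hence the ambient set of $A$ misses $S_\theta$ and the ambient set of $B$ misses $T_\theta$. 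For claim (1) this yields $\Gamma\cap(S_\theta\cup T_\theta)=\emptyset$, so $\Gamma$ is compensatory at $\theta$, contradicting (iii). For claim (2) it yields $S_\theta\subseteq\Gamma^C$ and $T_\theta\subseteq\Gamma$, which is exactly $\Gamma$ biased upward at $\theta$, contradicting (ii). For claim (3) it yields $S_\theta\subseteq\Gamma$ and $T_\theta\subseteq\Gamma^C$; letting $\theta^\dagger$ be the immediate successor of $\theta$ in each coordinate (which exists in $\Theta$ since $\theta\ll\bar\theta$), one checks that $\{\theta'\ll\theta^\dagger\}\subseteq S_\theta\subseteq\Gamma$ and $\{\theta'\ge\theta^\dagger\}\subseteq T_\theta\subseteq\Gamma^C$, so $\Gamma$ is biased downward at $\theta^\dagger$, again contradicting (ii).

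The main obstacle is finding the right framework to unify the three non-compensatory arguments; once the $S_\theta/T_\theta$ formulation is in place, the translation from compensatoriness of the boundary antichains to bias or compensatoriness of $\Gamma$ itself is essentially forced by the $\min/\max$ characterization. The disjointness step for claim (1) is the most case-heavy part but is otherwise routine, and the hitting-extremes condition is immediate from spanning.
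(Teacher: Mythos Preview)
Your proof is correct and follows essentially the same approach as the paper: spanning yields the hitting-extremes condition, disjointness for (2) and (3) is immediate from $\Gamma\cap\Gamma^C=\emptyset$, disjointness for (1) reduces to $\Gamma$ being compensatory at a point one grid step from $\gamma$, and the three non-compensatory claims reduce respectively to $\Gamma$ compensatory, $\Gamma$ biased upward, and $\Gamma$ biased downward. Your $S_\theta/T_\theta$ framing and the explicit successor-point construction for (3)---where the paper instead invokes the equivalent compensatory definition from Footnote~\ref{fn:equivalent compensatory}---are presentational rather than substantive differences.
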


We show that we can find distributions that satisfy all the conditions in \eqref{eqn:conditional dominance} by considering four cases depending on whether or not $\Gamma$ contains the extreme states $\ubar\theta,\bar\theta$. 

\textit{Case 1}: $\ubar\theta,\bar\theta\in \Gamma$. In this case, define $P^L|_{\min(\Gamma)}$ and $P^H|_{\max(\Gamma)}$ as Dirac measures on the singleton on which they are supported: $P^L|_{\min(\Gamma)}=D_{\ubar\theta}$ and $P^H|_{\max(\Gamma)}=D_{\bar\theta}$. It is immediate that $P^L|_{\min(\Gamma)}\prec^*_{cw} P^H|_{\max(\Gamma)}$. 

Define $P^L|_{\max(\Gamma^C)}$ and $P^H|_{\min(\Gamma^C)}$ as uniform distributions on their respective supports. By the complement spanning condition, $\max(\Gamma^C)$ obtains the maximal value on each coordinate, which implies that for all $i\in[m-1]$ and $j\in[n-1]$,
 \[
 \begin{split}
   F^L_1|_{\min(\Gamma)}(x_i) &=1 > F^L_1|_{\max(\Gamma^C)}(x_i)    \\
    F^L_2|_{\min(\Gamma)}(y_j) &=1 > F^L_2|_{\max(\Gamma^C)}(y_j)  
 \end{split}
\]
Hence, $P^L|_{\min(\Gamma)}\prec^*_{cw} P^L|_{\max(\Gamma^C)}$. A similar argument shows $ P^H|_{\min(\Gamma^C)}  \prec^*_{cw}P^H|_{\max(\Gamma)}$, so the distributions satisfy \eqref{eqn:conditional dominance}.\footnote{The resulting priors $P^L,P^H$ in this case are going to be those constructed in the proof of Theorem \ref{thm:cw possible}.}

\textit{Case 2}: $\ubar\theta, \bar\theta\in \Gamma^C$. Define $P^H|_{\min(\Gamma^C)}=D_{\ubar\theta}$ and $P^L|_{\max(\Gamma^C)}=D_{\bar\theta}$. By Lemma \ref{lemma:verifying ac} we have $\min(\Gamma)<^{\Theta}_{AC}\max(\Gamma)$. So, by Lemma \ref{lemma:ac dominance}, there exist $P^L|_{\min(\Gamma)}$ fully supported on $\min(\Gamma)$ and $P^H|_{\max(\Gamma)}$ fully supported on $\max(\Gamma)$ such that $P^L|_{\min(\Gamma)}\prec^*_{cw} P^H|_{\max(\Gamma)}$. Similar to Case 1, the spanning condition together with the fact that $P^L|_{\min(\Gamma)},P^H|_{\max(\Gamma)}$ are fully supported implies that $P^L|_{\min(\Gamma)}\prec^*_{cw} P^L|_{\max(\Gamma^C)}$ and $P^H|_{\min(\Gamma^C)}\prec^*_{cw} P^H|_{\max(\Gamma)}$ so \eqref{eqn:conditional dominance} holds. 

\textit{Case 3}: $\ubar\theta\in \Gamma, \bar\theta\in \Gamma^C$. Define $P^L|_{\min(\Gamma)}=D_{\ubar\theta}$ and $P^L|_{\max(\Gamma^C)}=D_{\bar\theta}$. It is immediate that $P^L|_{\min(\Gamma)}\prec^*_{cw}P^L|_{\max(\Gamma^C)} $. By Lemma \ref{lemma:verifying ac} we have $\min(\Gamma^C)<^{\Theta}_{AC}\max(\Gamma)$. So, by Lemma \ref{lemma:ac dominance}, there exist $P^H|_{\min(\Gamma^C)}$ and $P^H|_{\max(\Gamma)}$, each with full support, such that $P^H|_{\min(\Gamma^C)}\prec^*_{cw} P^H|_{\max(\Gamma)}$. Similar to Case 1, the spanning condition and the fact that $P^H|_{\max(\Gamma)}$ has full support implies that $P^L|_{\min(\Gamma)}\prec^*_{cw} P^H|_{\max(\Gamma)}$, so \eqref{eqn:conditional dominance} holds. 

\textit{Case 4}: Follows from Case 3 by swapping $\Gamma\leftrightarrow\Gamma^C$ and $L\leftrightarrow H$.

\bigskip 

\emph{\textbf{Step 3}: Constructing the priors $P^L$ and $P^H$.} 

Let $P^L|_{\min(\Gamma)}, P^L|_{\max(\Gamma^C)},P^H|_{\max(\Gamma)}, P^H|_{\min(\Gamma^C)}$ be distributions that satisfy the conditions in \eqref{eqn:conditional dominance}. Recall that $\Theta=\{x_1,\ldots,x_m\}\times\{y_1,\ldots,y_n\}$. By strong coordinatewise dominance, there exists $\varepsilon>0$ such that for all $i\in[m-1]$ and $j\in[n-1]$,
 \[
 \begin{split}
  F^L_1|_{\min(\Gamma)}(x_i)&>\max\{F^H_1|_{\max(\Gamma)}(x_i),F^L_1|_{\max(\Gamma^C)}(x_i)\}+\varepsilon \\
  F^L_2|_{\min(\Gamma)}(y_j)&>\max\{F^H_2|_{\max(\Gamma)}(y_j),F^L_2|_{\max(\Gamma^C)}(y_j)\}+\varepsilon
 \end{split}
 \]
For any set $A\subset\Theta$, let $U_A$ denote the uniform distribution over $A$. Let $\delta\in(0,1)$ and define\footnote{These distributions are defined as weighted mixtures of their summands. The sets $\Gamma^C\setminus\max(\Gamma^C)$ and $\Gamma^C\setminus\min(\Gamma^C)$ may be empty. This occurs when $\Gamma^C=\min(\Gamma^C)=\max(\Gamma^C)$. In this case, define the corresponding prior as a mixture over the remaining sets, i.e., $P^L = (1-\delta)\left[(1-\delta)P^L|_{\min(\Gamma)}+\delta U_{\Gamma\setminus \min(\Gamma)}\right] +\delta P^L|_{\max(\Gamma^C)}$ and similarly for $P^H$. Both priors are fully supported on $\Theta$ and it is straightforward to verify that all inequalities below hold for this case. The remaining sets in \eqref{eqn:priors} cannot be empty: the min and max sets of any finite set are nonempty, and if it were the case that $\Gamma=\min(\Gamma)=\max(\Gamma)$, then all states in $\Gamma$ would be incomparable to each other, which implies that $\Gamma$ is compensatory.}
 \begin{equation}\label{eqn:priors}
    \begin{split}
          P^L &:= (1-\delta)\left[(1-\delta)P^L|_{\min(\Gamma)}+\delta U_{\Gamma\setminus \min(\Gamma)}\right] +\delta\left[(1-\delta)P^L|_{\max(\Gamma^C)}+\delta U_{\Gamma^C\setminus\max(\Gamma^C)}\right]\\
          P^H &:= (1-\delta)\left[(1-\delta)P^H|_{\max(\Gamma)}+\delta U_{\Gamma\setminus \max(\Gamma)}\right] +\delta\left[(1-\delta)P^H|_{\min(\Gamma^C)}+\delta U_{\Gamma^C\setminus\min(\Gamma^C)}\right]\\ 
    \end{split}
     \end{equation}
     These priors are fully supported on $\Theta$ by construction. We will show that for sufficiently small $\delta$, polarization occurs for the identified set $\Gamma$ and the priors in \eqref{eqn:priors}.

    For any $A\subset\Theta$ let $F^U_i|_{A}$ denote the cdf of the marginal of $U_A$ on the $i$ coordinate. Note that $P^L|_{\Gamma}=(1-\delta)P^L|_{\min(\Gamma)}+\delta U_{\Gamma\setminus \min(\Gamma)}$ and $P^L|_{\Gamma^C}= (1-\delta)P^L|_{\max(\Gamma^C)}+\delta U_{\Gamma^C\setminus\max(\Gamma^C)}$. Thus, for every $i\in[m-1]$ and for $\delta$ sufficiently small such that $(1-\delta)\varepsilon>\delta$ we have
\[
\begin{split}
    F^L_1|_{\Gamma}(x_i)&= (1-\delta)F^L_1|_{\min(\Gamma)}(x_i)+\delta F^U_1|_{\Gamma\setminus \min(\Gamma)}(x_i) \ge (1-\delta)F^L_1|_{\min(\Gamma)}(x_i)\\
    &> (1-\delta) F^L_1|_{\max(\Gamma^C)}(x_i) +(1-\delta)\varepsilon \ge (1-\delta) F^L_1|_{\max(\Gamma^C)}(x_i)+\delta  \cdot 1\\
    &\ge  (1-\delta) F^L_1|_{\max(\Gamma^C)}(x_i)+ \delta F^U_1|_{\Gamma^C\setminus\max(\Gamma^C)}(x_i)  = F^L_1|_{\Gamma^C}(x_i)
\end{split}
\]
Similarly, $F^L_2|_{\Gamma}(y_j)>F^L_2|_{\Gamma^C}(y_j)$ for every $j\in[n-1]$. So $P^L|_{\Gamma}\prec_{cw} P^L|_{\Gamma^C}$. An analogous argument shows that $P^H|_{\Gamma}\succ_{cw}P^H|_{\Gamma^C}$. Thus, by Lemma \ref{lemma:Gamma and complement}, for sufficiently small $\delta$ we have $Q^L\prec_{cw}P^L$ and $P^H\prec_{cw}Q^H$.
 
It remains to prove that $P^L\prec^*_{cw} P^H$. For every $i\in[m-1]$ and for sufficiently small $\delta$,
\[
\begin{split}
       F^L_1(x_i)&\ge (1-\delta)^2F^L_1|_{\min(\Gamma)}(x_i)\\
       &>(1-\delta)^2 F^H_1|_{\max(\Gamma)}(x_i)+(1-\delta)^2\varepsilon> ( 1-\delta)^2 F^H_1|_{\max(\Gamma)}(x_i)+(1-\delta)2\delta +\delta^2 \\
       &\ge (1-\delta)^2F^H_1|_{\max(\Gamma)}(x_i)+(1-\delta)\delta F^U_1|_{\Gamma\setminus \max(\Gamma)}(x_i)\\&+(1-\delta)\delta F^H_1|_{\min(\Gamma^C)}(x_i)
       +\delta^2 F^U_1|_{\Gamma^C\setminus\min(\Gamma^C)}(x_i) = F^H_1(x_i)  
\end{split}
\] 
Here we take $\delta$ such that $(1-\delta)^2\varepsilon > (1-\delta)2\delta +\delta^2$. This inequality is satisfied for sufficiently small $\delta>0$ because as $\delta\rightarrow 0$ the right-hand-side converges to $0$ while the left-hand-side converges to $\varepsilon>0$. A similar argument shows that $F^L_2(y_j)>F^H_2(y_j)$ for every $j\in[n-1]$.  So $P^L\prec^*_{cw} P^H$.
\end{proof}

\begin{proof}[Proof of Lemma \ref{lemma:Gamma and complement}]
    For  any $\theta\in\Gamma$: $Q(\theta)-P(\theta) = \frac{P(\theta)}{P(\Gamma)} -P(\theta) = \frac{P(\theta)}{P(\Gamma)}P(\Gamma^C)$. So, for any coordinate $i\in[d]$ and $x\in\mathbb{R}$,   
 \[
\begin{split}
&Q(\theta_i\le x)- P(\theta_i\le x) = \sum_{\theta\in\Gamma;\theta_i\le x} \left[P(\theta)+\frac{P(\theta)}{P(\Gamma)}P(\Gamma^C)\right] - \sum_{\theta\in\Theta;\theta_i\le x} P(\theta) = \\ 
&       \frac{P(\Gamma^C)}{P(\Gamma)}\sum_{\theta\in\Gamma;\theta_i\le x}P(\theta)-\sum_{\theta\in\Gamma^C;\theta_i\le x} P(\theta)=
\frac{P(\Gamma^C)}{P(\Gamma)}P(\Gamma\cap\{\theta_i\le x\}) -P(\Gamma^C\cap\{\theta_i\le x\})
\end{split}
\]
So,
\[
\begin{split}
 &Q\preceq_{cw} P \iff Q(\theta_i\le x)\ge P(\theta_i\le x)\quad \forall i\in[d],x\in\mathbb{R}\iff\\ 
&\frac{P(\Gamma\cap\{\theta_i\le x\})}{P(\Gamma)} \ge \frac{P(\Gamma^C\cap\{\theta_i\le x\})}{P(\Gamma^C)}\quad\forall i\in[d],x\in\mathbb{R}\iff P|_{\Gamma} \preceq_{cw}P|_{\Gamma^C}   
\end{split}
\]
\end{proof}

\begin{proof}[Proof of Lemma \ref{lemma:strong st expectation}]
Let $F^L$ and $F^H$ be the corresponding cdfs. Denote $X=\{x_1,\ldots,x_n\}$, where $x_1<\ldots<x_n$. Using summation by parts,   
   \[
   \begin{split}
        E^H[u]-E^L[u] &= \sum_{i=1}^n u(x_i)\left(P^H(x_i)-P^L(x_i)\right)\\ & = u(x_n)\left(F^H(x_n)-F^L(x_n)\right)-  \sum_{i=1}^{n-1}\left(F^H(x_i)-F^L(x_i)\right)\left(u(x_{i+1})-u(x_i)\right)  \\
        & = \sum_{i=1}^{n-1}\left(F^L(x_i)-F^H(x_i)\right)\left(u(x_{i+1})-u(x_i)\right)
   \end{split}
   \]
  Since $P^L\prec^*_{st} P^H$, we have $F^L(x_i)-F^H(x_i)>0$ for every $i\in[n-1]$. This, together with the fact that $u$ is weakly increasing and not constant implies that the summands in the final expression are weakly greater than 0 for all $i\in[n-1]$ and strictly greater than 0 for some $i\in[n-1]$, hence $E^H[u]-E^L[u]>0$.
\end{proof}

\begin{proof}[Proof of Lemma \ref{lemma:strong cw expectation}]
    Since $u$ is not constant, one of the functions $u_i$ must not be constant. Thus, since $P^L_i\prec^*_{st} P^H_i$ for all $i\in[d]$, Lemma \ref{lemma:strong st expectation} implies that $E_{P^L_i}[u_i]\le E_{P^H_i}[u_i]$ for all $i\in[d]$ and $E_{P^L_j}[u_j]< E_{P^H_j}[u_j]$ for at least one $j\in[d]$. So, 
    \[
    E^L[u] = \sum_{i=1}^d E_{P^L_i}[u_i] < \sum_{i=1}^d E_{P^H_i}[u_i] = E^H[u]
    \]
    \end{proof}
 
\begin{proof}[Proof of Lemma \ref{lemma:ac dominance}] For  the proof it will be easier to denote $P^A$ and $P^B$ rather than $P_A,P_B$ so that $P^A_i$ denotes the marginal of $P^A$ on the $i$ coordinate, $F^A_i$ denotes the cdf of $P^A_i$, and likewise for $B$.

We prove by induction on $s=|\Theta|$. Note that for $\Theta$ to be a two-dimensional product set its cardinality $s$ must be a composite number so the base case is $s=4$ and the induction is over all composite numbers $s$.   

\emph{Base case:} Assume $\Theta=\{x_1,x_2\}\times\{y_1,y_2\}$. Let $A,B\subset\Theta$ be two nonempty antichains such that $B>^{\Theta}_{AC}A$. Note that, apart from the two-element antichain $\{(x_2,y_1),(x_1,y_2)\}$, all other antichains in $\Theta$ are singletons. Hence, at least one of the disjoint antichains  $A,B$ is a singleton. Assume w.l.o.g. that this is $B$. Since $B$ hits the maximal value on both coordinates, it must be the case that $B=\{(x_2,y_2)\}$. Let $P^B$ be the Dirac measure on $B$ and $P^A$ be any full support measure on $A$. Then 
$F^A_1(x_1)>0=F^B_1(x_1)$ and $F^A_2(y_1)>0= F^B_2(y_1)$ so $P^A\prec^*_{cw}P^B$. 

\emph{Inductive step:} Assume the induction hypothesis holds for all composite numbers $s'<s$ and consider a state space $\Theta$ with $|\Theta|=s$.  Let $A,B\subset\Theta$ be two nonempty antichains such that $B>^{\Theta}_{AC}A$. If one of these is a singleton then the proof is straightforward by an argument similar to the base case. Assume that $|A|,|B|\ge 2$. We can write $A=\{\delta^1,\ldots,\delta^k\}, B=\{\theta^1,\ldots,\theta^{\ell}\}$, where both sets are ordered according to the first coordinate from the least to greatest. That is, $\delta^1_1<\delta^2_1<\ldots<\delta^k_1$, and similarly for $B$ (superscripts index the elements of $A,B$, subscripts denote coordinates). Since $A$ and $B$ are antichains it must be the case that the order on the second coordinate is opposite, i.e., in $A$, $\delta^1_2>\delta^2_2,\ldots>\delta^k_2$ and similarly in $B$. Since $A$ hits the minimal value on both coordinates and $B$ hits the maximal value on both coordinates it follows that $\delta^1_1$ is the minimum value on the first coordinate in $\Theta$, whereas $\theta^1_2$ is the maximal value on the second coordinate in $\Theta$. In particular, we have $\delta^1_1\le \theta^1_1$ and $\delta^1_2\le\theta^1_2,$ so $\delta^1\le \theta^1$. 

Next, we claim that one of the following holds 
\begin{equation}
\label{eq:disjunctive_conditions}
\left\{
\begin{array}{l}
    \text{(i)  }\quad\delta^1\le \theta^1,\theta^2;\text{ or } \\

    \text{(ii)  }\quad \theta^1\ge \delta^1,\delta^2
\end{array}
\right.
\end{equation}
Assume towards contradiction that this is not the case. Then, since $\delta^1$ is minimal on the first coordinate (obtains the minimal value in $\Theta_1$), and $\theta^1$ is maximal on the second coordinate it must be the case that $\delta^1_2>\theta^2_2$ and $\theta^1_1<\delta^2_1$. Let $\tau = (\theta^1_1,\max\{\theta^2_2,\delta^2_2\})\in\Theta$. Note that $\tau\ll\bar\theta$.

We claim that for every $\theta\in A\cup B$ neither $\theta\le \tau$ nor $\theta\gg\tau$ hold, so that $A\cup B$ is compensatory in contradiction to the third condition in Definition \ref{def:AC dominance}. To see this, note that $\delta^1_1\le \theta^1_1=\tau_1$ and $\delta^1_2>\max\{\theta^2_2,\delta^2_2\}=\tau_2$. So $\delta^1$ is neither weakly below $\tau$ nor strictly above it. Since $A$ is an antichain ordered according to the first coordinate, for any $2\le r\le k$ we have $\delta^r_1\ge\delta^2_1>\theta^1_1$ and $\delta^r_2\le \delta^2_2\le\max\{\theta^2_2,\delta^2_2\}$,  so that all states in $A$ are neither weakly below $\tau$ nor strictly above it. 

Likewise, $\theta^1_1=\tau_1$ and, since $\theta^1$ is maximal on the second coordinate, $\theta^1_2>\tau_2$.   So $\theta^1$ is neither weakly below $\tau$ nor strictly above it. Since $B$ is an antichain ordered according to the first coordinate, for any $2\le r\le \ell$ we have $\theta^r_1>\theta^1_1=\tau_1$ and $\theta^r_2\le \theta^2_2\le \tau_2$, so that all states in $B$ are neither weakly below $\tau$ nor strictly above it. Thus, $A\cup B$ is compensatory, a contradiction. 

We have established that one of the conditions in \eqref{eq:disjunctive_conditions} holds. Assume w.l.o.g. that (i) holds (the proof for (ii) is symmetric).  Let $B'=B\setminus\{\theta^1\}$ and let  $\Theta'=\{\theta\in\Theta|\theta_2\le\theta^2_2\}$. We claim that $B'>^{\Theta'}_{AC}A$. Since $\delta^1_2\le \theta^2_2$ we have $A\subset\Theta'$ and therefore $A$ and $B'$ are disjoint antichains in $\Theta'$, so the first condition in Definition \ref{def:AC dominance} is satisfied. For the second condition, note that $A$ hits the minimal values in $\Theta'$ on both coordinates because these are the same minimal values as in $\Theta$. And $B'$ hits the maximal values in $\Theta'$ on both coordinates, because the maximal value in $\Theta'$ on the first coordinate is the same as in $\Theta$ (and by assumption this is the value of $\theta^{\ell}_1$) and the maximal value on the second coordinate is $\theta^2_2$.

To see that the third condition in Definition \ref{def:AC dominance} holds,  assume towards contradiction that $A\cup B'$ is compensatory (in $\Theta'$). Let $\bar\theta'=(\bar\theta_1,\theta^2_2)$ denote the maximal value in $\Theta'$. Let $\theta^*\in\Theta'$ be such that $\theta^*\ll\bar\theta'$ and such that every state in $ A\cup B'$ is neither weakly below $\theta^*$ nor strictly above it. This implies that $\theta^2_1\le \theta^*_1$ (otherwise $\theta^2\gg \theta^*$), which in turn implies that $\theta^1_1< \theta^*_1$. But we also have $\theta^1_2>\theta^*_2$, so $\theta^1$ is neither weakly below $\theta^*$ nor strictly above it. But then every state in $ A\cup B$ is neither weakly below $\theta^*$ nor strictly above it and $\theta^*\ll \bar\theta$, so $A\cup B$ is compensatory in $\Theta$ in contradiction to $B>^{\Theta}_{AC}A$. 

We have established that $B'>^{\Theta'}_{AC}A$. Since $|\Theta'|<s$, by the induction hypothesis there exist distributions $P^A,P^{B'}$ supported on $A$ and $B'$, respectively, such that $P^A\prec^*_{cw}P^{B'}$ on $\Theta'$. Recall that $\ubar\theta$ is the minimal state in $\Theta$ and $\bar\theta'=(\bar\theta_1,\theta^2_2)$ is the maximal state in $\Theta'$. So $P^A\prec^*_{cw}P^{B'}$ implies that there exists $\varepsilon>0$ such that for every coordinate $i$ and for every $t\in\Theta'_i$ with $t<\max(\Theta'_i)$ we have $F^A_i(t)>F^{B'}_i(t)+2\varepsilon$. Let $D_{\theta^1}$ be the Dirac measure on $\theta^1$ and define the mixture distribution $P^B =(1-\varepsilon)P^{B'}+\varepsilon D_{\theta^1}$. Then, for any  $x\in \Theta'_1=\Theta_1$, with $x<\bar\theta'_1=\bar\theta_1$ we have  
 \[F^B_1(x)\le (1-\varepsilon)F^{B'}_1(x)+\varepsilon < F^{B'}_1(x)+2\varepsilon <F^A_1(x)\]
So, $P^A_1\prec^*_{st} P^B_1$ in $\Theta_1$.

Moving on to the second coordinate, fix any $y\in\Theta_2$ with $y<\bar\theta_2$. Since $\theta^1_2=\bar\theta_2$, the Dirac mass on $\theta^1$ assigns no probability to $\{\theta_2\le y\}$, and therefore
\[
F^B_2(y)=(1-\varepsilon)F^{B'}_2(y).
\]
If $F^{B'}_2(y)>0$, then $(1-\varepsilon)F^{B'}_2(y)<F^{B'}_2(y)\le F^A_2(y)$. If instead $F^{B'}_2(y)=0$, then $F^B_2(y)=0$ while $F^A_2(y)>0$ because $P^A$ has full support on $A$ and $A$ hits the minimal value on the second coordinate, so $P^A_2(\ubar\theta_2)>0$. In either case, we have $F^B_2(y)<F^A_2(y)$.

Hence, $P^A_2\prec^*_{st} P^B_2$ on $\Theta_2$ and therefore $P^A\prec^*_{cw} P^B$ on $\Theta$, which completes the induction argument.  
\end{proof}

\begin{proof}[Proof of Lemma \ref{lemma:verifying ac}]
    We prove these in order, by showing that each pair of sets satisfies the three conditions in Definition \ref{def:AC dominance}.
    
    \medskip 
    
\noindent\emph{\textbf{Proving that $\min(\Gamma) <^{\Theta}_{AC}\max(\Gamma)$}}:
\begin{enumerate}
    \item \emph{$\min(\Gamma)$ and $\max(\Gamma)$ are disjoint}: Assume towards contradiction that there exists $\theta\in \min(\Gamma)\cap\max(\Gamma)$. There must be one coordinate in which $\theta$ does not obtain the maximal value and one coordinate in which $\theta$ does not obtain the minimal value. Otherwise, if for example $\theta=\bar\theta$ then since $\theta\in\min(\Gamma)$ it must be the case that $\Gamma=\{\theta\}$ but then $\Gamma$ is not spanning. Assume w.l.o.g. that $\theta$ is not minimal on the first coordinate and is not maximal on the second coordinate. 
    
    Recall that $\Theta_1= \{x_1,\ldots,x_m\}$ with $x_1<x_2<\ldots<x_m$. The assumption that $\theta$ is not minimal on the first coordinate implies $\theta_1= x_j$ for some $1<j\le m$. Define $\theta^* = (x_{j-1}, \theta_2)$. Then $\theta^*\ll \bar\theta$.   We will show that for every $\theta'\in \Gamma$ it is not the case that $\theta'\le\theta^*$ and it is not the case that $\theta'\gg \theta^*$ so $\Gamma$ is compensatory in contradiction to the third condition in Theorem \ref{thm:cw conditions}. 

    If there exists $\theta'\in\Gamma$ with $\theta'\le \theta^*$ then $\theta'\le \theta$ and $\theta'\neq \theta$ in contradiction to $\theta\in\min(\Gamma)$. If there exists $\theta'\in \Gamma$ with $\theta'\gg\theta^*$ then $\theta'\ge\theta$ and $\theta'\neq \theta$ in contradiction to $\theta\in\max(\Gamma)$. 
    
   \item \emph{$\min(\Gamma)$ hits the minimal value  on both coordinates and $\max(\Gamma)$ hits the maximal value on both coordinates:} Follows immediately from the assumption that $\Gamma$ is spanning. 
   \item \emph{$\min(\Gamma)\cup \max(\Gamma)$ is non-compensatory:} Assume towards contradiction that there exists $\theta\in\Theta$, with $\theta\ll\bar\theta$, such that for every $\theta'\in \min(\Gamma)\cup \max(\Gamma)$ it is not the case that $\theta'\le\theta$ and it is not the case that $\theta'\gg \theta$. Then there does not exist $\theta'\in\Gamma$ with $\theta'\gg \theta$ since that would imply the existence of $\tilde\theta\in\max(\Gamma)$ such that $\tilde\theta\ge \theta'\gg\theta$, a contradiction. Similarly, there does not exist $\theta'\in\Gamma$ with $\theta'\le \theta$ as that would imply the existence of  $\tilde\theta\in\min(\Gamma)$ such that $\tilde\theta\le\theta'\le \theta$. So $\Gamma$ is compensatory, a contradiction.  
    
\end{enumerate}

   \medskip 
    
\noindent\emph{\textbf{Proving that $\min(\Gamma)<^{\Theta}_{AC}\max(\Gamma^C)$}}:
\begin{enumerate}
    \item \emph{$\min(\Gamma)$ and $\max(\Gamma^C)$ are disjoint}: immediate.
    \item \emph{$\min(\Gamma)$ hits the minimal value  on both coordinates and $\max(\Gamma^C)$ hits the maximal value on both coordinates:} Follows immediately from the assumption that $\Gamma$ and $\Gamma^C$ are spanning.  
     \item \emph{$\min(\Gamma)\cup \max(\Gamma^C)$ is non-compensatory:} Assume towards contradiction that there exists $\theta\in\Theta$, with $\theta\ll\bar\theta$, such that for every $\theta'\in \min(\Gamma)\cup \max(\Gamma^C)$ it is not the case that $\theta'\le\theta$ and it is not the case that $\theta'\gg \theta$. Then $\Gamma$ includes all states strictly above $\theta$, because if $\theta'\gg\theta$ and $\theta'\in\Gamma^C$ then there exists some $\tilde\theta\in\max(\Gamma^C)$ such that $\tilde\theta\ge\theta'\gg\theta$, a contradiction. Similarly, $\Gamma^C$ includes all states weakly below $\theta$, as otherwise there exists $\tilde\theta\in\min(\Gamma)$ with $\tilde\theta\le \theta$. Thus, $\Gamma$ is biased upward, in contradiction to the second condition in Theorem \ref{thm:cw conditions}.
    \end{enumerate} 

\medskip 
    
\noindent\emph{\textbf{Proving that $\min(\Gamma^C)<^{\Theta}_{AC}\max(\Gamma)$}}: 
The arguments for the first two conditions of Definition \ref{def:AC dominance} are exactly as those in the proof that $\min(\Gamma)<^{\Theta}_{AC}\max(\Gamma^C)$. For the third condition, assume towards contradiction that there exists $\theta\in\Theta$, with $\theta\gg\ubar\theta$, such that for every $\theta'\in \min(\Gamma^C)\cup \max(\Gamma)$ it is not the case that $\theta'\ge\theta$ and it is not the case that $\theta'\ll \theta$.\footnote{Here we are using the equivalent definition for a set to be compensatory that is stated in Footnote \ref{fn:equivalent compensatory}. } Then, $\Gamma$ includes all states strictly below $\theta$, because if there exists $\theta'\in\Gamma^C$ with $\theta'\ll\theta$ then there exists $\tilde\theta\in\min(\Gamma^C)$ such that $\tilde\theta\le \theta'\ll\theta$, a contradiction. Similarly, $\Gamma^C$ includes all states weakly above $\theta$. Thus, $\Gamma$ is biased downward in contradiction to the second condition in Theorem \ref{thm:cw conditions}.
\end{proof}

\section{Probability of Polarization}\label{appendix:probability}
The following example demonstrates that limit coordinatewise polarization is not only possible but can also be very probable. However, there is a tradeoff between the probability of polarization and its magnitude.

\begin{example}\label{example:probability}
    Let $\delta\in(0,1)$. Consider the $2\times 2$ state space $\Theta=\{x_1,x_2\}\times\{y_1,y_2\}$ with the priors given in Figure \ref{fig:example 2 priors}. 
\begin{figure}[H]
\centering
\begin{minipage}[b]{0.48\textwidth}
\centering
\begin{tikzpicture}[scale=1.75]
  \draw[->, thick] (-0.3,0) -- (2.2,0) node[right] {$x$};
  \draw[->, thick] (0,-0.3) -- (0,2.2) node[above] {$y$};
  \draw[thick] (0,0) rectangle (2,2);
  \draw[thick] (1,0) -- (1,2);
  \draw[thick] (0,1) -- (2,1);
  \draw (0.5, -0.03) -- (0.5, 0.03) node[below] {$x_1$};
  \draw (1.5, -0.03) -- (1.5, 0.03) node[below] {$x_2$};
  \draw (-0.03, 0.5) -- (0.03, 0.5) node[left] {$y_1$};
  \draw (-0.03, 1.5) -- (0.03, 1.5) node[left] {$y_2$};
  \node at (0.5,0.5) {$\mathbf{(1-\delta)\frac{3}{4}}$};
  \node at (1.5,1.5) {$\mathbf{(1-\delta)\frac{1}{4}}$};
  \node at (1.5,0.5) {$\mathbf{\delta\frac{1}{2}}$};
  \node at (0.5,1.5) {$\mathbf{\delta\frac{1}{2}}$};
  \node[anchor=north east, font=\large\bfseries] at (0,0) {\(\mathbf{P^L}\)};
\end{tikzpicture}
\end{minipage}\hfill
\begin{minipage}[b]{0.48\textwidth}
\centering
\begin{tikzpicture}[scale=1.75]
  \draw[->, thick] (-0.3,0) -- (2.2,0) node[right] {$x$};
  \draw[->, thick] (0,-0.3) -- (0,2.2) node[above] {$y$};
  \draw[thick] (0,0) rectangle (2,2);
  \draw[thick] (1,0) -- (1,2);
  \draw[thick] (0,1) -- (2,1);
  \draw (0.5, -0.03) -- (0.5, 0.03) node[below] {$x_1$};
  \draw (1.5, -0.03) -- (1.5, 0.03) node[below] {$x_2$};
  \draw (-0.03, 0.5) -- (0.03, 0.5) node[left] {$y_1$};
  \draw (-0.03, 1.5) -- (0.03, 1.5) node[left] {$y_2$};
  \node at (0.5,0.5) {$\mathbf{(1-\delta)\frac{1}{4}}$};
  \node at (1.5,0.5) {$\mathbf{\delta\frac{1}{2}}$};
  \node at (0.5,1.5) {$\mathbf{\delta\frac{1}{2}}$};
  \node at (1.5,1.5) {$\mathbf{(1-\delta)\frac{3}{4}}$};
  \node[anchor=north east, font=\large\bfseries] at (0,0) {\(\mathbf{P^H}\)};
\end{tikzpicture}
\end{minipage}
\caption{Priors for Example \ref{example:probability}}
\label{fig:example 2 priors}
\end{figure}
The posteriors upon observing a partitional signal with identified set  $\Gamma=\{(x_1,y_1), (x_2,y_2)\}$ are given in Figure \ref{fig:example 2 posteriors}. 
\begin{figure}[htbp]
\centering
\begin{minipage}[b]{0.48\textwidth}
\centering
\begin{tikzpicture}[scale=1.75]
  \draw[->, thick] (-0.3,0) -- (2.2,0) node[right] {$x$};
  \draw[->, thick] (0,-0.3) -- (0,2.2) node[above] {$y$};
  \draw[thick] (0,0) rectangle (2,2);
  \draw[thick] (1,0) -- (1,2);
  \draw[thick] (0,1) -- (2,1);
  \draw (0.5, -0.03) -- (0.5, 0.03) node[below] {$x_1$};
  \draw (1.5, -0.03) -- (1.5, 0.03) node[below] {$x_2$};
  \draw (-0.03, 0.5) -- (0.03, 0.5) node[left] {$y_1$};
  \draw (-0.03, 1.5) -- (0.03, 1.5) node[left] {$y_2$};
  \node at (0.5,0.5) {\textbf{3/4}};
  \node at (1.5,0.5) {\textbf{0}};
  \node at (0.5,1.5) {\textbf{0}};
  \node at (1.5,1.5) {\textbf{1/4}};
  \node[anchor=north east, font=\large\bfseries] at (0,0) {\(\mathbf{Q^L}\)};
\end{tikzpicture}
\end{minipage}\hfill
\begin{minipage}[b]{0.48\textwidth}
\centering
\begin{tikzpicture}[scale=1.75]
  \draw[->, thick] (-0.3,0) -- (2.2,0) node[right] {$x$};
  \draw[->, thick] (0,-0.3) -- (0,2.2) node[above] {$y$};
  \draw[thick] (0,0) rectangle (2,2);
  \draw[thick] (1,0) -- (1,2);
  \draw[thick] (0,1) -- (2,1);
  \draw (0.5, -0.03) -- (0.5, 0.03) node[below] {$x_1$};
  \draw (1.5, -0.03) -- (1.5, 0.03) node[below] {$x_2$};
  \draw (-0.03, 0.5) -- (0.03, 0.5) node[left] {$y_1$};
  \draw (-0.03, 1.5) -- (0.03, 1.5) node[left] {$y_2$};
  \node at (0.5,0.5) {\textbf{1/4}};
  \node at (1.5,0.5) {\textbf{0}};
  \node at (0.5,1.5) {\textbf{0}};
  \node at (1.5,1.5) {\textbf{3/4}};
  \node[anchor=north east, font=\large\bfseries] at (0,0) {\(\mathbf{Q^H}\)};
\end{tikzpicture}
\end{minipage}
\caption{Posteriors for Example \ref{example:probability}}
\label{fig:example 2 posteriors}
\end{figure}

In this symmetric example, a natural measure for the \emph{magnitude} of polarization as a function of $\delta$ is the difference between the post- and pre-signal gaps in marginal beliefs:
\[
M(\delta) = Q^L_1(x_1)-Q^H_1(x_1)-(P^L_1(x_1)-P^H_1(x_1)) = Q^L_2(y_1)-Q^H_2(y_1)-(P^L_2(y_1)-P^H_2(y_1))  
\]
A simple computation shows $M(\delta)= \frac{\delta}{2}$, so the magnitude of polarization is linearly increasing in $\delta$. 

Polarization only occurs on $\Gamma$, to which both agents assign prior probability $1-\delta$. This holds regardless of how the partitional signal is defined outside $\Gamma$: it may fully reveal the state or only reveal that it is not in $\Gamma$, and neither case polarizes. Thus, there is a direct tradeoff between the probability of polarization and its magnitude. When $\delta$ is close to $0$, polarization occurs with probability close to $1$ according to both agents' subjective beliefs, but its magnitude is small. As $\delta$ increases, polarization becomes less likely, but its magnitude increases.

\end{example}

\printbibliography

\end{appendices}

\end{document}